\definecolor{Mycolor}{rgb}{.2,.2,.8}
\newtheorem{definition}{Definition}
\newtheorem{theorem}{Theorem}
\newtheorem{corollary}{Corollary}
\newtheorem{lemma}[theorem]{Lemma}
\newcommand{\Cc}{\mathcal{C}}
\newcommand{\Hc}{\mathcal{H}}
\newcommand{\Nc}{\mathcal{N}}
\newcommand{\Pc}{\mathcal{P}}
\newcommand{\Tc}{\mathcal{T}}
\newcommand{\Wc}{\mathcal{W}}
\newcommand{\Xc}{\mathcal{X}}
\newcommand{\Yc}{\mathcal{Y}}
\newcommand{\Zc}{\mathcal{Z}}
\newcommand{\Nb}{\mathbb{N}}
\newcommand{\Eb}{\mathbb{E}}
\newcommand{\Rb}{\mathbb{R}}
\newcommand{\argmax}{\arg\max}
\newcommand{\argmin}{\arg\min}
\newcommand{\gen}{\mathrm{gen}}
\begin{document}
\title{Stochastic Chaining and Strengthened Information-Theoretic Generalization Bounds}

\author{Ruida Zhou \and Chao Tian \and Tie Liu}
\date{}
\maketitle

\begin{abstract}
We propose a new approach to apply the chaining technique in conjunction with information-theoretic measures to bound the generalization error of machine learning algorithms. Different from the deterministic chaining approach based on hierarchical partitions of a metric space,  previously proposed by Asadi et al., we propose a stochastic chaining approach, which replaces the hierarchical partitions with an abstracted Markovian model borrowed from successive refinement source coding. This approach has three benefits over deterministic chaining: 1) the metric space is not necessarily bounded, 2) facilitation of subsequent analysis to yield more explicit bound, and 3) further opportunity to optimize the bound by removing the geometric rigidity of the partitions. The proposed approach includes the traditional chaining as a special case, and can therefore also utilize any deterministic chaining construction. We illustrate these benefits using the problem of estimating Gaussian mean and that of  phase retrieval. For the former, we derive a bound that provides an order-wise improvement over previous results, and for the latter we provide a stochastic chain that allows optimization over the chaining parameter. 
\end{abstract}

\section{Introduction}

\subsection{Motivation}

Bounding the generalization error of machine learning algorithms has been studied extensively in the literature. Classical results in this area, such as the VC dimension and the Rademacher complexity, focused mostly on capturing the structural constraints of the hypothesis class. Despite the considerable insights offered by such bounds, they do not explain well the performance of powerful learning algorithms such as deep neural networks \cite{zhang2016understanding}. There has been significant recent interest in finding information-theoretic bounds to capture more data-dependent and distribution-dependent structures which may lead to strengthened bounds.

Asadi et al. \cite{asadi2018chaining} introduced the chaining technique, which has traditionally been used in bounding random process, into the derivation of information-theoretic generalization bounds. The technique resolves the issue that certain unbounded mutual information quantity leads to a vacuous bound, and may also yield a tighter bound in general. The main idea behind the result in \cite{asadi2018chaining} can be summarized as follows. The generalization error can be viewed as a random process $\{X_t\}_{t\in\mathcal{T}}$ indexed by the hypothesis parameters. If $(\mathcal{T},d)$ is a bounded metric space under the metric $d$, then $\mathcal{T}$ can be divided into finer and finer partitions, with each coarse partition embedded into the next layer finer partition, and the partition cells having decreasing radius. The generalization error can then be represented by a sum of chained quantities, each relating to two adjacent partition layers. Since the partitions are becoming finer and finer, each of these decomposed quantities can be bounded more effectively, eventually resulting in an overall tighter bound. This approach is referred to as {\em chaining mutual information}.

Despite the success of the chaining mutual information approach, we observe several difficulties in applying the chaining technique in this manner, which motivated the current work:
\begin{itemize}
\item \textbf{Restriction on the metric space to be bounded:} This chaining approach assumes a bounded metric space $(\mathcal{T},d)$. However, even in some of the simplest settings, the parameter space may not be bounded (or impractical to assume the bound on $(\mathcal{T},d)$ is known). 
\item \textbf{Difficulty in computation: } Using these deterministic and hierarchical partitions, the information measures involved in the bounds can be difficult to compute or bound analytically.  
\item \textbf{Restrictions in the partitions:} The hierarchical partitions place certain unnecessary geometric constraints on the covering radius sequence of the required partitions, which can impact the bound. 
\end{itemize}

\begin{figure*}[tb]
\centering
\includegraphics[width=0.85\textwidth]{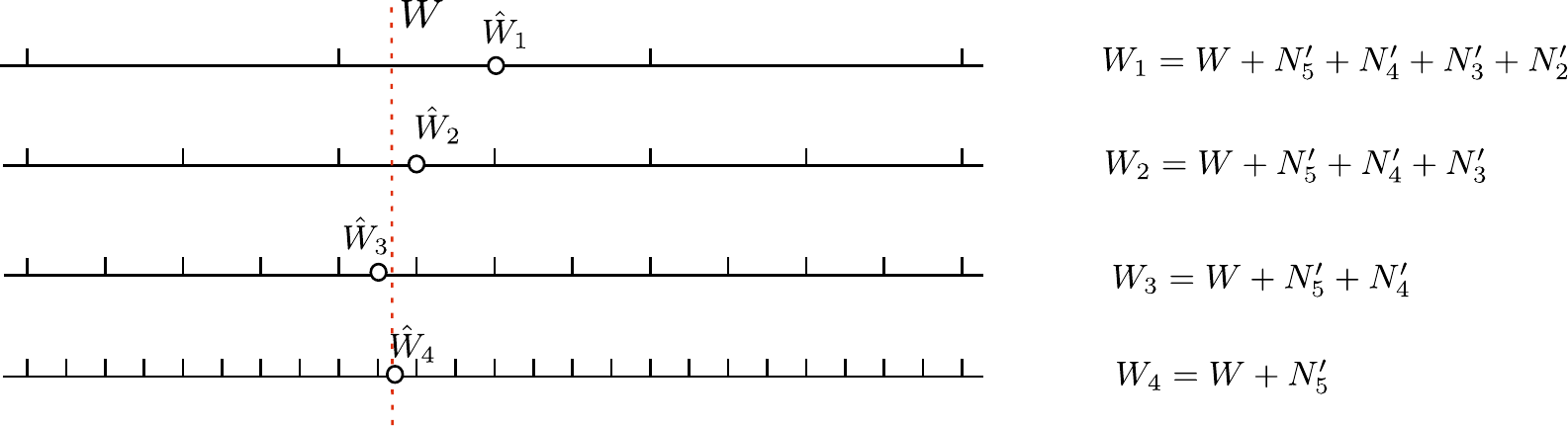}
\caption{Multilevel quantization of a random value $W$ using quantizers of different stepsize and the corresponding information-theoretic successive refinement source coding model. \label{fig:SR} }
\end{figure*}

To make these difficulties more concrete, consider the following two simple examples. 
\begin{itemize}

\item \textbf{Example-1}: The training samples are drawn $i.i.d.$ following a normal distribution with an unknown mean $\mu$, and the algorithm wishes to estimate this mean. Here the parameter space is $\mathcal{T}=\mathbb{R}$, which is unbounded under any meaningful metric, and particularly so for the natural Euclidean distance. Moreover, since the induced measure on $\mathcal{T}$ will not be uniform, computing the series sum of mutual information is rather difficult if not impossible. 
\item \textbf{Example-2}:  Let $Z:=(G_1,G_2)\sim \mathcal{N}(0,I_2)$ be standard normal vectors in $\mathbb{R}^2$. The learner needs to identify the phase of the vector through certain means, and the learned result is modeled as the true phase with certain additive noise. Here $\mathcal{T}$ is the bounded interval of the angle $[0,2\pi)$. A natural sequence of partitions is to reduce the stepsize by an integer factor $\gamma$. However, this would preclude any non-integer $\gamma$ values, which potentially makes the bound looser. 
\end{itemize}

\subsection{Main contribution: stochastic chaining}

The sequence of refining partitions of the metric space associated with the chaining technique is reminiscent of multilevel quantization in data compression. For example, a scalar source $W$ distributed on the real line can be quantized with stepsize of $2^{-k}$ for the $k$-th level quantization, resulting in its quantized representation $\hat{W}_k$. As the index $k$ increases, the stepsize reduces and the accuracy of the quantization improves; see the left side of  Fig. \ref{fig:SR} for an illustration.

The information-theoretic model for multilevel quantization is usually referred to as successive refinement source coding \cite{rimoldi1994successive,equitz1991successive}. Particularly useful to us is a stochastic abstraction in this framework. For example, assume there are a total of $K$-levels, then one possible  stochastic representation of the reconstruction $\hat{W}_k$ is $W_k$ that is written as 
\begin{align}
W_k=\alpha_k\left(W+\sum_{i=k+1}^{K+1} N'_i\right),
\end{align}
where $N'_i$'s are mutually independent random noises, also independent of $W$, and $\alpha_k$'s are certain fixed scalar coefficients; see the right side of  Fig. \ref{fig:SR}. It is seen that the relation among $W$ and $\{W_k\}_{k=1}^K$ is captured by the joint probability distribution among them, and we can measure the ``distance'' between $W$ and $W_k$ using $\Eb d(W,W_k)$, in contrast to the conventional chaining approach which uses the covering radius. 

The main idea of this work is that these abstracted stochastic versions of $\{W_k\}_{i=1}^K$ can be used to replace the partition-based quantized versions in bounding the generalization error. This new approach helps to resolve the difficulties mentioned above: firstly the restriction for the metric space to be bounded is naturally removed, and secondly, it helps to simplify the computation, and lastly, the abstract model can remove the geometric constraints in designing the hierarchical partitions in some cases. 

The proposed stochastic chaining approach essentially allows more flexible constructions of the chains than the more traditional deterministic chaining. One can attempt to further optimize the construction of stochastic chains based on the existing knowledge regarding the underlying metric space and the corresponding probability distribution for the given problem setting. On the other hand, when such knowledge is not available, we can safely fall back to the default construction of the original deterministic chaining partitions, which is essentially a special case of the stochastic chaining. 

We obtain two generalization bounds using stochastic chaining instead of the deterministic chaining in \cite{ asadi2018chaining}, built on the mutual information bound given in \cite{xu2017information} and the individual sample mutual inforamtion bound given in \cite{bu2020tightening}, respectively. We further show that the proposed bound can reduce to the VC-dimension bound correctly. We then illustrate the benefits of this new approach in the context of the two examples. For the problem of estimating the Gaussian mean mentioned above, we can obtain a bound that is order-wise stronger than previously given in the literature. For the phase retrieval problem considered in \cite{asadi2018chaining}, the bound can be naturally improved by optimizing over a continuous parameter. 

\subsection{Related works}

The approach of using information-theoretic tools to develop generalization error bounds was pioneered by Russo and Zou \cite{russo2016controlling}, and then extended by Xu and Rakinsky \cite{xu2017information}. The bound in \cite{xu2017information} was later tightened by Bu et al. in \cite{bu2020tightening}, using the individual sample mutual information instead of sample set mutual information. Steinke and Zakynthinou \cite{steinke2020reasoning} proposed a conditional mutual information based bound by introducing pseudo samples and split them into training and testing samples. Combining the idea of error decomposition \cite{bu2020tightening} and the conditional mutual information bound in \cite{steinke2020reasoning}, Haghifam et al. \cite{haghifam2020sharpened} provided a sharpened bound based on conditional individual mutual information. Hafez-Kolahi et al. proposed a streamlined view of several bounding techniques proposed in the literature \cite{hafez2020conditioning}. Rodriguez-Galvez et al.\cite{rodriguez2021random} and Zhou  et al.  \cite{zhou2021individually} provide further improved on conditional individual mutual information type of bound. Hellstrom and Durisi \cite{hellstrom2020generalization,hellstrom2020generalizationB} used information density to unify several existing bounds and bounding approaches. Similar bounds using other measures can be found in \cite{rodriguez2021tighter}. The information-theoretic bounds have been used to bound generalization errors in noisy and iterative learn algorithms \cite{pensia2018generalization,li2019generalization,negrea2019information,haghifam2020sharpened,rodriguez2021random}.

\subsection{Organization of the paper}

The rest of the paper is organized as follows. In Section \ref{sec:notation}, we introduce the necessary notation and general background on the chaining technique and successive refinement source coding. The main results are given in Section \ref{sec:main} with a few important discussions. In Section \ref{sec:examples}, we return to the two motivating examples to illustrate the benefits of the proposed bound, 
and Section \ref{sec:conclusion} concludes the paper. 

\section{Notation and preliminaries}
\label{sec:notation}

\subsection{Generalization error}

Consider the supervised learning setting, and denote the data domain as $\Zc := \Xc \times \Yc$, where $\Xc$ is the instance domain and $\Yc$ is the label domain. The hypothesis class is denoted as $\Hc_\Wc = \{ h_{W} : W \in \Wc \} \subseteq \Yc^{\Xc}$, where $\Wc$ is the parameter space, or more generally the index set of the hypothesis class. A learning algorithm has access to a sequence of training samples  $Z_{[n]} = (Z_1, Z_2, \ldots, Z_n)$, where each $Z_i$ is drawn independently from $\Zc$ following some unknown probability distribution $\xi$, and the notation $[n]$ is used to denote the set $\{1,2,\ldots,n\}$. The mapping from the data set $Z_{[n]}$ to the hypothesis can be represented by the kernel $P_{W | Z_{[n]}}$. 

Under a loss function $\ell:  \Wc  \times \Zc \rightarrow \Rb$, the population risk is given as
\begin{align}
L_{\xi}(w) := \Eb_{Z \sim \xi}\ell(w, Z). \label{eqn:def-pop-loss}
\end{align}
For training using the data set $Z_{[n]}$, the empirical risk of a given hypothesis $h_w$ is 
\begin{align}
L_{Z_{[n]}}(w) := \frac{1}{n} \sum_{i = 1}^n \ell(w, Z_i). \label{eqn:def-emp-loss}
\end{align}
The generalization error for the given data set is defined as
\begin{align}
\gen_{Z_{[n]}}(\xi,w) := L_{\xi}(w) - L_{Z_{[n]}}(w). \label{eqn:def-gen}
\end{align}
The expected generalization error is defined as 
\begin{align}
\gen(\xi, P_{W|Z_{[n]}}) := \Eb \left[L_{\xi}(W) - L_{Z_{[n]}}(W) \right], \label{eqn:def-gen-expected}
\end{align}
where the expectation is taken over the joint distribution $P(W,Z_{[n]})=\xi^n \otimes P_{W|Z_{[n]}}$. Our goal is to bound $\gen(\xi, P_{W|Z_{[n]}})$ in this work. 

Generalization error can also be written in a different form by defining 
\begin{align}
\gen^i_{Z_i}(\xi,w) & :=L_{\xi}(w) - \ell(w, Z_i), \\
\gen^i(\xi,P_{W|Z_i}) & :=\Eb[L_{\xi}(W) - \ell(W, Z_i)]. \label{eqn:def-gen-individual}
\end{align}
Clearly $\gen(\xi, P_{W|Z_{[n]}})=\frac{1}{n}\sum_{i=1}^n \gen^i(\xi,P_{W|Z_i})$. It is worth noting that the distribution $P_{W|Z_i}$ is obtained by marginalizing over $P(W,Z_{[n]})$ (and dividing $\xi$).

\subsection{Random process and partitions}

Let $\{X_t\}_{t\in\mathcal{T}}$ be a random process with the index set $\mathcal{T}$. There is a metric $d$ on $\mathcal{T}$ which describes the dependence among the random variables in the random process. For simplicity, $\{X_t\}_{t\in\mathcal{T}}$ is written as $X_\mathcal{T}$ whenever it does not cause confusion.

The following definitions are standard.

\begin{definition}[Separable process]
The random process $X_\mathcal{T}$ on the metric space $(\mathcal{T},d)$ is called separable if there is a dense countable set $\Tc_0 \subset \Tc$ such that for any $t \in \Tc$, there exists a sequence $(s_n)$ in $\Tc_0$ such that $s_n \rightarrow t$ and $X_{s_n} \rightarrow X_{t}~a.s.$.
\end{definition}
All the random processes considered in this paper are separable.

\begin{definition}[Sub-Gaussian process]
The random process $X_\mathcal{T}$ on the metric space $(\mathcal{T},d)$ is called sub-Gaussian if $\Eb [X_t]=0$ for all $t\in\mathcal{T}$, and 
\begin{align}
\Eb \left[ e^{\lambda(X_t-X_s)}\right]\leq e^{\frac{1}{2}\lambda^2d^2(t,s)}, \quad for\,\, all \,\, t,s\in \mathcal{T},\,\lambda\geq 0. \notag
\end{align}
\end{definition}

Let $W$ be a random variable on the set $\mathcal{W}$. From here on, we use capital letter (such as $W$) to denote a random variable, and its lower case letter (such as $w$) to indicate a realization. In the particular setting of generalization bound, the random variable $W$ is the index (or the parameters) of the hypothesis chosen by the possibly randomized learning algorithm using the stochastically generated data set $Z_{[n]}$. The random process is $X_{\Tc} = \{\gen_{Z_{[n]}}(\xi, w)\}_{w\in\mathcal{W}}$ index by $w$ in this case.  $W$ is jointly distributed with $X_{\Wc}$, and $X_W = \gen_{Z_{[n]}}(\xi, W)$ is the generalization error of interest. 

A well known tool in bounding a random process is the chaining technique \cite{talagrand2006generic}. The notion of an increasing sequence of $\epsilon$-partition of the metric space $(\mathcal{T},d)$ is particularly important in this setting. 

\begin{definition}[Increasing sequence of $\epsilon$-partition] A partition $\mathcal{P}=\{A_1,A_2,\ldots,A_m\}$ of the set $\mathcal{T}$ is called an $\epsilon$-partition of the metric space $(\mathcal{T},d)$ if for all $i=1,2,\ldots,m$, $A_i$ can be contained within a ball of radius $\epsilon$. A sequence of partitions $\{\mathcal{P}_k\}_{k=m}^\infty$ of a set $\mathcal{T}$ is called an increasing sequence if for any $k\geq m$ and each $A\in\mathcal{P}_{k+1}$, there exists $B\in\mathcal{P}_k$ such that $A \subseteq B$. 
\end{definition}

In the context of bounding the generalization error, when it is viewed as a random process $X_\mathcal{T}$, we are interested in the expectation $\Eb [X_W]$. 

\subsection{Information theory and successive refinement source coding}
For a discrete random variable $X$, the entropy is denoted as $H(X)$, and for a continuous random variable $Y$, its differential entropy is denoted as $h(Y)$. The mutual information between two random variables $X,Y$ is denoted as $I(X;Y)$, regardless whether they are discrete or continuous. We use natural logarithm in this work, and thus information is measured in terms of nats. 

Successive refinement source coding considers the problem of encoding a source $X$ in a total of $K$ stages, each with rate $R_k$ nats of coding budget, and the end user uses the encoded information in all previous stages, i.e., stages $1,2,\ldots,k$, to reconstruct the source at stage $k$. The achievable rate region, i.e., the set of encoding rate vectors, is given as the collection of nonnegative rates $(R_1,R_2,\ldots,R_K)$ such that 
\begin{align}
\sum_{i=1}^kR_i\geq I(X;{X}_1,{X}_2,\ldots,{X}_{k}), 
\end{align}
where ${X}_k$ is a random variable representing the stochastic reconstruction of $X$ at stage-$k$, which guarantees $\Eb [d(X,{X}_k)]\leq D_k$, i.e., the distortion is less than or equal to the given distortion budget $D_k$; see \cite{rimoldi1994successive} for more details. One particular useful choice is to make $X\leftrightarrow {X}_K\leftrightarrow {X}_{K-1}\leftrightarrow \cdots\leftrightarrow X_1$ a Markov chain. The simplification is immediate since 
\begin{align}
I(X;{X}_1,{X}_2,\ldots,{X}_{k})=I(X;{X}_{k})
\end{align}
in this case. When the choice of $X_K,X_{K-1},\ldots,X_1$ satisfying the Markov chain also yields the optimal coding rates among all possible choices of auxiliary $X_K,X_{K-1},\ldots,X_1$ random variables, the source $X$  is called successively refinable \cite{equitz1991successive}. More results on this problem can be found in \cite{tuncel2003additive,lastras2001all}.

\vspace{-0.2cm}
\section{Main results}
\label{sec:main}

\subsection{Main Theorems}

We define a new notion of stochastic chain as follows. 
\begin{definition}[Stochastic chain of random process and random variable pair] Let $(X_\mathcal{T},W)$ be a random process and random variable pair, where $W$ is a random variable in the set $\mathcal{T}$. A sequence of random variables $\{W_k\}^{\infty}_{k=k_0}$, each distributed in the set $\mathcal{T}$, is called a stochastic chain of the pair $(X_\mathcal{T},W)$,  if 1) $\lim_{k\rightarrow \infty} \Eb[X_{W_{k}}]=\Eb[X_W]$, 2) $\Eb [X_{W_{k_0}}]=0$, and 3) $\{X_t\}_{t\in\mathcal{T}}\leftrightarrow W\leftrightarrow W_{k}\leftrightarrow W_{k-1}$ is a Markov chain for every $k> k_0$. 
\end{definition}

We allow $k_0$ to take the value of $-\infty$ instead of providing another parallel definition to that effect. We are now ready to present the first main theorem of this work. 

\begin{theorem}
\label{theorem:firstgen}
Assume $\{\gen_{Z_{[n]}}(\xi,w)\}_{w\in\mathcal{W}}$ is sub-Gaussian on $(\mathcal{W},d)$, and $\{W_k\}^{\infty}_{k=k_0}$  is a stochastic chain of $(\{\gen_{Z_{[n]}}(\xi,w)\}_{w\in\mathcal{W}},W)$. 
 Then 
\begin{align}
 & \gen(\xi,P_{W|Z_{[n]}}) \notag \\
&\leq \sum_{k=k_0+1}^\infty\Eb\left[d(W_k,W_{k-1})\sqrt{2D(P_{Z_{[n]}|W_k}||P_{Z_{[n]}})}\right].
\end{align}
Moreover, we have
\begin{align}
    & \gen(\xi,P_{W|Z_{[n]}}) \notag \\
    & \leq  \sum_{k=k_0+1}^\infty \sqrt{\Eb[d^2(W_k,W_{k-1})]}\sqrt{2I(Z_{[n]}; W_k)}.
\end{align}
\end{theorem}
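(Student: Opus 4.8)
The plan is to telescope the generalization error along the stochastic chain and then bound each one-step increment by a conditional sub-Gaussian change-of-measure argument. Writing $X_w=\gen_{Z_{[n]}}(\xi,w)$ so that $\gen(\xi,P_{W|Z_{[n]}})=\Eb[X_W]$, I would first invoke the three defining properties of the chain. Since $\Eb[X_{W_{k_0}}]=0$ and $\Eb[X_{W_k}]\to\Eb[X_W]$ as $k\to\infty$, a telescoping identity gives
\begin{align}
\Eb[X_W]=\sum_{k=k_0+1}^{\infty}\bigl(\Eb[X_{W_k}]-\Eb[X_{W_{k-1}}]\bigr),\notag
\end{align}
so it remains to bound each increment $\Eb[X_{W_k}-X_{W_{k-1}}]$. (If the resulting series diverges the bound is vacuous, so convergence may be assumed.)

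For a fixed increment, the crucial point is that the sub-Gaussian parameter $d^2(W_k,W_{k-1})$ is itself random, so the decoupling must be performed conditionally on $(W_k,W_{k-1})$. The defining Markov condition $\{X_t\}_{t\in\Tc}\leftrightarrow W\leftrightarrow W_k\leftrightarrow W_{k-1}$ implies $\{X_t\}_{t\in\Tc}\leftrightarrow W_k\leftrightarrow W_{k-1}$, hence the conditional law of the process given $(W_k,W_{k-1})$ depends only on $W_k$. For fixed $(w_k,w_{k-1})$ the increment $X_{w_k}-X_{w_{k-1}}$ is centered under the marginal process law (each $X_t$ has zero mean) and $d^2(w_k,w_{k-1})$-sub-Gaussian by assumption. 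The standard centered change-of-measure bound, obtained from the Donsker--Varadhan inequality $\lambda\,\Eb_Q[U]\le D(Q\|P)+\tfrac12\lambda^2\sigma^2$ optimized over $\lambda>0$, then yields a per-realization estimate of the form $d(w_k,w_{k-1})\sqrt{2D(\,\cdot\,\|\,\cdot\,)}$, the divergence being taken between the conditional and marginal laws given $W_k$. Since $\{X_t\}$ is a deterministic function of $Z_{[n]}$, the data-processing inequality for relative entropy lets me replace these process-level divergences by $D(P_{Z_{[n]}|W_k}\|P_{Z_{[n]}})$; taking expectation over $(W_k,W_{k-1})$ and summing over $k$ then gives the first bound.

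The second bound follows from the first by applying Cauchy--Schwarz termwise,
\begin{align}
\Eb\!\Bigl[d(W_k,W_{k-1})\sqrt{2D(P_{Z_{[n]}|W_k}\|P_{Z_{[n]}})}\Bigr]\le\sqrt{\Eb[d^2(W_k,W_{k-1})]}\,\sqrt{2\,\Eb_{W_k}\!\bigl[D(P_{Z_{[n]}|W_k}\|P_{Z_{[n]}})\bigr]},\notag
\end{align}
together with the identity $\Eb_{W_k}[D(P_{Z_{[n]}|W_k}\|P_{Z_{[n]}})]=I(Z_{[n]};W_k)$.

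I expect the main obstacle to be the conditional decoupling: one must arrange the conditioning so that the random distance $d(W_k,W_{k-1})$ emerges as the per-realization sub-Gaussian parameter while the divergence collapses, through the Markov property, to the one-sided quantity $D(P_{Z_{[n]}|W_k}\|P_{Z_{[n]}})$ rather than the coarser $D(P_{Z_{[n]}|W_k,W_{k-1}}\|P_{Z_{[n]}})$. A secondary technical point is the passage from the Markov condition stated for the process $\{X_t\}$ to a statement usable for $Z_{[n]}$; this is harmless because $\{X_t\}$ is a function of $Z_{[n]}$, but it should be stated explicitly (and is exact, with no data-processing slack, whenever the chain is constructed directly from $W$). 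The telescoping and the Cauchy--Schwarz step are otherwise routine.
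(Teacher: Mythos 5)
Your proposal is correct and follows essentially the same route as the paper's proof of Theorem \ref{theorem:general} (from which Theorem \ref{theorem:firstgen} is deduced): telescoping along the chain using properties 1)--2), a conditional Donsker--Varadhan argument with the sub-Gaussian bound on the cumulant giving the per-realization estimate $d(w_k,w_{k-1})\sqrt{2D(\cdot||\cdot)}$, the Markov property 3) collapsing the conditioning from $(W_k,W_{k-1})$ to $W_k$, and the data-processing inequality over $X_{\mathcal{W}}\leftrightarrow Z_{[n]}\leftrightarrow W\leftrightarrow W_k$ to pass to $D(P_{Z_{[n]}|W_k}||P_{Z_{[n]}})$ and $I(Z_{[n]};W_k)$. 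The only (harmless) cosmetic difference is in the second bound: you obtain it from the first via Cauchy--Schwarz together with $\Eb_{W_k}[D(P_{Z_{[n]}|W_k}||P_{Z_{[n]}})]=I(Z_{[n]};W_k)$, whereas the paper re-runs the variational bound with a common $\lambda$ pulled outside the expectation before optimizing --- the two computations yield the identical quantity $\sqrt{\Eb[d^2(W_k,W_{k-1})]}\sqrt{2I(Z_{[n]};W_k)}$.
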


The following theorem is based on the individual sample mutual information bound of \cite{bu2020tightening}. 

\begin{theorem}
\label{theorem:secondgen}
For each $i\in [n]$, assume $\{\gen_{Z_i}^i(w)\}_{w\in\mathcal{W}}$ is sub-Gaussian on $(\mathcal{W},d)$, and $\{W_{i,k}\}^{\infty}_{k=k_0}$ is a stochastic chain of $(\{\gen_{Z_i}^i(w)\}_{w \in \mathcal{W}},W)$.  
Then
\begin{align}
& \gen(\xi,P_{W|Z_{[n]}}) \notag \\
&\leq \frac{1}{n}\sum_{i=1}^n\sum_{k=k_0+1}^\infty\Eb\left[d(W_{i,k},W_{i,k-1})\right.\notag\\
&\qquad\qquad\qquad\qquad \left.\cdot\sqrt{2D(P_{Z_{i}|W_{i,k}}||P_{Z_{i}})}\right].
\end{align}
Moreover, we have
\begin{align}
    & \gen(\xi,P_{W|Z_{[n]}}) \notag \\
    & \leq  \frac{1}{n}\sum_{i=1}^n \sum_{k=k_0+1}^\infty \sqrt{\Eb[d^2(W_{i,k},W_{i,k-1})]}\sqrt{2I(Z_{i}; W_{i,k})}.
\end{align}
\end{theorem}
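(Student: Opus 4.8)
The plan is to reduce Theorem~\ref{theorem:secondgen} to a per-sample application of the mechanism behind Theorem~\ref{theorem:firstgen}. The starting point is the decomposition noted earlier, $\gen(\xi,P_{W|Z_{[n]}})=\frac{1}{n}\sum_{i=1}^n\gen^i(\xi,P_{W|Z_i})$, so it suffices to bound each individual-sample term $\gen^i(\xi,P_{W|Z_i})=\Eb[X^i_W]$, where $X^i_t:=\gen^i_{Z_i}(t)=L_{\xi}(t)-\ell(t,Z_i)$ is the individual-sample process, and then average over $i$. Note that under the marginal $Z_i\sim\xi$ this process is automatically centered, $\Eb_{Z_i}[X^i_t]=L_{\xi}(t)-L_{\xi}(t)=0$, which is consistent with the sub-Gaussian hypothesis imposed on $\{\gen^i_{Z_i}(w)\}_{w\in\mathcal{W}}$.

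Fixing $i$, I would use conditions 1 and 2 of the stochastic chain, namely $\lim_{k\to\infty}\Eb[X^i_{W_{i,k}}]=\Eb[X^i_W]$ and $\Eb[X^i_{W_{i,k_0}}]=0$, to telescope:
\begin{align}
\Eb[X^i_W]=\sum_{k=k_0+1}^\infty\left(\Eb[X^i_{W_{i,k}}]-\Eb[X^i_{W_{i,k-1}}]\right). \notag
\end{align}
Here the partial sum up to $K$ collapses to $\Eb[X^i_{W_{i,K}}]$, whose limit is $\Eb[X^i_W]$ by condition~1, so no delicate interchange of limit and infinite sum is required. The core of the argument is then a bound on each increment $\Eb[X^i_{W_{i,k}}-X^i_{W_{i,k-1}}]$.

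To estimate a single increment I would condition on $(W_{i,k},W_{i,k-1})=(t,s)$ and invoke the Markov structure $\{X^i_t\}\leftrightarrow W\leftrightarrow W_{i,k}\leftrightarrow W_{i,k-1}$, which yields $W_{i,k-1}\perp Z_i\mid W_{i,k}$ and hence lets me replace the conditional law of $Z_i$ by $P_{Z_i|W_{i,k}=t}$. Writing $Y:=X^i_t-X^i_s$ as a function of $Z_i$, the Donsker--Varadhan variational inequality gives, for every $\lambda>0$,
\begin{align}
\lambda\,\Eb_{P_{Z_i|W_{i,k}=t}}[Y]\leq D(P_{Z_i|W_{i,k}=t}\|P_{Z_i})+\log\Eb_{P_{Z_i}}\!\left[e^{\lambda Y}\right]. \notag
\end{align}
The sub-Gaussian hypothesis bounds the last term by $\tfrac12\lambda^2 d^2(t,s)$, and optimizing over $\lambda$ gives $\Eb_{P_{Z_i|W_{i,k}=t}}[Y]\leq d(t,s)\sqrt{2D(P_{Z_i|W_{i,k}=t}\|P_{Z_i})}$. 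Taking expectation over $(W_{i,k},W_{i,k-1})$, summing the telescoped series, and averaging over $i$ produces the first displayed bound. The second (``moreover'') bound then follows by Cauchy--Schwarz applied to each summand, $\Eb[d(W_{i,k},W_{i,k-1})\sqrt{2D(\cdot\|\cdot)}]\le\sqrt{\Eb[d^2(W_{i,k},W_{i,k-1})]}\sqrt{2\,\Eb_{W_{i,k}}D(P_{Z_i|W_{i,k}}\|P_{Z_i})}$, combined with the identity $\Eb_{W_{i,k}}D(P_{Z_i|W_{i,k}}\|P_{Z_i})=I(Z_i;W_{i,k})$.

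The main obstacle I anticipate is the single-increment estimate together with the bookkeeping around it: correctly justifying, through the Markov chain in the stochastic-chain definition, that conditioning on the pair $(W_{i,k},W_{i,k-1})$ reduces to conditioning on $W_{i,k}$ alone for the law of $Z_i$, so that only $D(P_{Z_i|W_{i,k}}\|P_{Z_i})$ appears in the bound. Everything else—the telescoping, the $\lambda$-optimization, the Cauchy--Schwarz step, and the mutual-information identity—is routine once this reduction is in place.
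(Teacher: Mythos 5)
Your proposal is correct and follows essentially the same route as the paper: the paper proves Theorem \ref{theorem:secondgen} by applying the general chaining bound (Theorem \ref{theorem:general}) to each individual-sample process $\{\gen^i_{Z_i}(w)\}_{w\in\mathcal{W}}$ --- the same telescoping decomposition, Donsker--Varadhan step, sub-Gaussian bound on the log-moment-generating function, and $\lambda$-optimization you describe --- followed by averaging via $\gen(\xi,P_{W|Z_{[n]}})=\frac{1}{n}\sum_{i=1}^n\gen^i(\xi,P_{W|Z_i})$. The only cosmetic differences are that the paper reaches $D(P_{Z_i|W_{i,k}}\|P_{Z_i})$ and $I(Z_i;W_{i,k})$ from the process-level quantities by data processing along $X^i_{\Wc}\leftrightarrow Z_i\leftrightarrow W\leftrightarrow W_{i,k}$ (rather than applying Donsker--Varadhan directly to the conditional law of $Z_i$, as you do, which rests on the same implicit Markov relation involving $Z_i$ that the paper also invokes), and it obtains the mutual-information form by exchanging $\inf_{\lambda}$ with the outer expectation rather than by your equivalent Cauchy--Schwarz step.
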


These two theorems are given in the context of bounding generalization errors, which are obtained using a more general result on bounding random processes. 
\begin{theorem}
\label{theorem:general}
Assume $X_\mathcal{T}$ is sub-Gaussian on $(\mathcal{T},d)$, and $\{W_k\}^{\infty}_{k=k_0}$ is a stochastic chain for $(X_\mathcal{T},W)$, then
\begin{align}
& \Eb\left[X_W\right] \notag \\
& \leq \sum_{k=k_0+1}^\infty\Eb\left[d(W_k,W_{k-1})\sqrt{2D(P_{X_{\mathcal{T}}|W_k}||P_{X_{\mathcal{T}}})}\right].
\end{align}
Moreover, we have
\begin{align}
    & \Eb\left[X_W\right] \notag \\
    & \leq  \sum_{k=k_0+1}^\infty \sqrt{\Eb[d^2(W_k,W_{k-1})]}\sqrt{2I(X_{\mathcal{T}}; W_k)}.
\end{align}
\end{theorem}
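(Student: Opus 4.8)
The plan is to reduce $\Eb[X_W]$ to a telescoping sum of one-step increments $\Eb[X_{W_k}]-\Eb[X_{W_{k-1}}]$ and then to bound each increment by combining the sub-Gaussian moment generating function estimate with the Donsker--Varadhan variational formula for relative entropy. First I would invoke properties (1) and (2) of the stochastic chain: for any finite $K>k_0$ the partial sum telescopes as
\[
\Eb[X_{W_K}]-\Eb[X_{W_{k_0}}]=\sum_{k=k_0+1}^{K}\left(\Eb[X_{W_k}]-\Eb[X_{W_{k-1}}]\right),
\]
and since property (2) gives $\Eb[X_{W_{k_0}}]=0$ while property (1) gives $\Eb[X_{W_K}]\to\Eb[X_W]$ as $K\to\infty$, we obtain $\Eb[X_W]=\sum_{k=k_0+1}^{\infty}(\Eb[X_{W_k}]-\Eb[X_{W_{k-1}}])$. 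It therefore suffices to bound each increment $\Eb[X_{W_k}-X_{W_{k-1}}]$.

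Next I would fix a level $k$ and condition on the pair $(W_k,W_{k-1})=(w_k,w_{k-1})$. The Markov property (3) entails $X_\mathcal{T}\leftrightarrow W_k\leftrightarrow W_{k-1}$, which forces the conditional law of the process given $(w_k,w_{k-1})$ to be $P_{X_\mathcal{T}|W_k=w_k}$, independent of $w_{k-1}$. Applying Donsker--Varadhan with $P=P_{X_\mathcal{T}|W_k=w_k}$, reference $Q=P_{X_\mathcal{T}}$, and test function $g=\lambda(X_{w_k}-X_{w_{k-1}})$ for $\lambda>0$ gives
\[
\lambda\,\Eb[X_{w_k}-X_{w_{k-1}}\mid w_k,w_{k-1}]\leq D(P_{X_\mathcal{T}|W_k=w_k}\|P_{X_\mathcal{T}})+\log\Eb_{P_{X_\mathcal{T}}}\!\left[e^{\lambda(X_{w_k}-X_{w_{k-1}})}\right].
\]
The sub-Gaussian hypothesis bounds the log-MGF term by $\tfrac12\lambda^2 d^2(w_k,w_{k-1})$ and, through the zero-mean condition, also ensures $\Eb_Q[X_{w_k}-X_{w_{k-1}}]=0$ so that the left side is the genuine increment. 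Optimizing over $\lambda$ (with minimizer $\lambda=\sqrt{2D}/d$) collapses the right side to $d(w_k,w_{k-1})\sqrt{2D(P_{X_\mathcal{T}|W_k=w_k}\|P_{X_\mathcal{T}})}$.

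Finally I would take the expectation over $(W_k,W_{k-1})$ and sum over $k$, which yields the first bound immediately. For the second bound I would apply the Cauchy--Schwarz inequality to the expectation, separating the factor $d(W_k,W_{k-1})$ from the divergence factor, and then use the standard identity $\Eb_{W_k}[D(P_{X_\mathcal{T}|W_k}\|P_{X_\mathcal{T}})]=I(X_\mathcal{T};W_k)$ to convert the averaged relative entropy into mutual information.

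The step I expect to be the main obstacle is the rigorous handling of the telescoping limit and the exchange of expectation with the variational bound. In particular, when $k_0=-\infty$ property (2) must be read as $\lim_{k\to-\infty}\Eb[X_{W_k}]=0$, requiring a two-sided limit argument and control of the tail of the series; and the conditioning step needs the measure-theoretic fact that conditioning on $(W_k,W_{k-1})$ leaves the process distributed as $P_{X_\mathcal{T}|W_k}$, which must be justified from the Markov chain together with integrability (Fubini and dominated convergence) strong enough to pass the per-realization Donsker--Varadhan estimate under the outer expectation.
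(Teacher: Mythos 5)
Your proposal is correct and follows essentially the same route as the paper's proof: telescoping $\Eb[X_W]$ via chain properties (1)--(2), conditioning on $(W_k,W_{k-1})$ and using the Markov property, applying Donsker--Varadhan with the sub-Gaussian bound $\tfrac12\lambda^2 d^2(w_k,w_{k-1})$ on the log-MGF, and optimizing $\lambda$ to get $d(w_k,w_{k-1})\sqrt{2D(P_{X_\Tc|W_k}\|P_{X_\Tc})}$. The only minor (and harmless) variations are that you apply Donsker--Varadhan directly at the level of the full process law $P_{X_\Tc|w_k}$ versus $P_{X_\Tc}$ where the paper's proof of this theorem works with the law of the increment $\Delta X_{w_k,w_{k-1}}$ and then invokes the data-processing inequality for KL divergence, and that you derive the mutual-information bound from the first bound via Cauchy--Schwarz together with $\Eb_{W_k}[D(P_{X_\Tc|W_k}\|P_{X_\Tc})]=I(X_\Tc;W_k)$, whereas the paper re-derives it by pulling $\inf_{\lambda>0}$ outside the expectation (a fixed $\lambda$ for all realizations) before optimizing --- both yield the identical final bound.
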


By using a deterministic sequence of partitions to form $\{W_k\}_{k_0}^\infty$, we recover the result in \cite{asadi2018chaining} which was obtained for bounded metric space $(\mathcal{T},d)$.
\begin{corollary} \label{corollary:partition}
Let $\{\mathcal{P}_k\}_{k=k_0}^\infty$ be an increasing sequence of partitions of $\mathcal{T}$, where for each $k\geq k_0$, $\mathcal{P}_k$ is a $2^{-k}$-partition of the bounded metric space $(\mathcal{T},d)$, and $2^{-k_0}\geq \text{diam}(\mathcal{T})$. Let $W_{k}$ be the center of the covering ball of the partition cell that $W$ belongs to in the partition $\mathcal{P}_k$, then for separable process $X_{\Tc}$ on $(\Tc, d)$,
\begin{align}
\Eb\left[X_W\right]&\leq \sum_{k=k_0+1}^\infty\Eb\left[3\cdot2^{-k} \sqrt{2D(P_{X_{\mathcal{T}}|W_k}||P_{X_{\mathcal{T}}})}\right] \notag \\
& \leq \sum_{k=k_0+1}^\infty 3\cdot2^{-k}\sqrt{2I(X_{\mathcal{T}};W_k)}.
\end{align}
\end{corollary}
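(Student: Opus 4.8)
The plan is to recognize the partition-based construction as a special instance of a stochastic chain, so that both displayed inequalities follow by specializing Theorem \ref{theorem:general}. First I would verify that the deterministic sequence $\{W_k\}_{k=k_0}^\infty$ satisfies the three defining properties of a stochastic chain of $(X_\mathcal{T},W)$. The Markov property (property 3) is immediate: because $\{\mathcal{P}_k\}$ is an increasing sequence, each cell of $\mathcal{P}_k$ is contained in a unique cell of $\mathcal{P}_{k-1}$, so $W_{k-1}$ is a deterministic function of $W_k$, which in turn is a deterministic function of $W$; hence $X_\mathcal{T}\leftrightarrow W\leftrightarrow W_k\leftrightarrow W_{k-1}$ holds trivially. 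For property 2, since $2^{-k_0}\geq\text{diam}(\mathcal{T})$ the whole space fits inside one ball of radius $2^{-k_0}$, so I would take $\mathcal{P}_{k_0}=\{\mathcal{T}\}$ to be the single-cell partition; then $W_{k_0}$ is a deterministic point $c_0$ and the zero-mean part of the sub-Gaussian assumption gives $\Eb[X_{W_{k_0}}]=\Eb[X_{c_0}]=0$.

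The next step is the distance estimate that produces the constant $3\cdot2^{-k}$. Since the partitions are nested, $W$ lies simultaneously in the radius-$2^{-k}$ covering ball centered at $W_k$ and the radius-$2^{-(k-1)}$ covering ball centered at $W_{k-1}$, so the triangle inequality gives
\begin{align}
d(W_k,W_{k-1})\leq d(W_k,W)+d(W,W_{k-1})\leq 2^{-k}+2^{-(k-1)}=3\cdot2^{-k}, \notag
\end{align}
a deterministic bound. Substituting this into the first inequality of Theorem \ref{theorem:general} and pulling the constant outside the expectation yields the first displayed bound. The second displayed bound then follows term-by-term by factoring out $3\cdot2^{-k}$ and applying Jensen's inequality to the concave map $x\mapsto\sqrt{x}$, together with the identity $\Eb_{W_k}[D(P_{X_\mathcal{T}|W_k}\|P_{X_\mathcal{T}})]=I(X_\mathcal{T};W_k)$, so that $\Eb[\sqrt{2D(P_{X_\mathcal{T}|W_k}\|P_{X_\mathcal{T}})}]\leq\sqrt{2I(X_\mathcal{T};W_k)}$; equivalently one may invoke the second inequality of Theorem \ref{theorem:general} directly, since $\sqrt{\Eb[d^2(W_k,W_{k-1})]}\leq3\cdot2^{-k}$.

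The step I expect to be the main obstacle is property 1, the convergence $\lim_{k\to\infty}\Eb[X_{W_k}]=\Eb[X_W]$, which is the only place the separability hypothesis is genuinely used. As $k\to\infty$ the covering radius $2^{-k}\to0$, so $d(W_k,W)\to0$ and separability yields $X_{W_k}\to X_W$ almost surely. Upgrading this to convergence of expectations requires a uniform-integrability argument; here I would control the increments $X_{W_k}-X_W$ using the sub-Gaussian tail bound and the summability of $\sum_k 3\cdot2^{-k}$ to dominate the sequence and pass the limit inside the expectation. Once all three properties are verified, the corollary is an immediate specialization of Theorem \ref{theorem:general}, which also confirms that the deterministic chaining of Asadi et al. \cite{asadi2018chaining} is the special case of stochastic chaining in which $\{W_k\}$ is a deterministic coarsening of $W$.
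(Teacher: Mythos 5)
Your proposal is correct and is essentially the paper's own argument: the paper treats Corollary \ref{corollary:partition} as the direct specialization of Theorem \ref{theorem:general} in which the nested-partition centers form a (deterministic) stochastic chain --- the Markov property holding because $W_{k-1}$ is a function of $W_k$, the condition $\Eb[X_{W_{k_0}}]=0$ coming from the single top-level cell permitted by $2^{-k_0}\geq \text{diam}(\mathcal{T})$, separability supplying the limit condition, and the triangle inequality giving $d(W_k,W_{k-1})\leq 2^{-k}+2^{-(k-1)}=3\cdot 2^{-k}$, after which both displayed inequalities follow (the second via $\sqrt{\Eb[d^2(W_k,W_{k-1})]}\leq 3\cdot 2^{-k}$ or Jensen, as you say). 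One small caution on your property-1 sketch, which the paper leaves implicit: $X_{W_k}-X_W$ is \emph{not} unconditionally sub-Gaussian with parameter $2^{-k}$, since $W_k$ is correlated with the process, so the uniform-integrability/domination step should instead run through the Donsker--Varadhan increment bounds (tail sums of the chaining series), exactly as in the proof of Theorem \ref{theorem:general}.
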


The proof of Theorem \ref{theorem:general} can be found in the supplementary material. To obtain Theorem \ref{theorem:firstgen} from Theorem \ref{theorem:general}, we let $\mathcal{T}:=\mathcal{W}$, and $X_w:=\gen(w)$ for $w\in\mathcal{W}$. Due to the Markov chain 
\begin{align}
X_{\mathcal{T}}=\{\gen(w)\}_{w\in\mathcal{W}}\leftrightarrow Z_{[n]}\leftrightarrow W\leftrightarrow W_k,
\end{align}
for all $k\geq k_1$, we can apply the data processing inequality for KL divergence \cite{wu2017lecture} and that for mutual information, respectively, to arrive at 
\begin{align}
D(P_{X_{\mathcal{T}}|W_k}||P_{X_{\mathcal{T}}}) & \leq D(P_{Z_{[n]}|W_k}||P_{Z_{[n]}}), \notag\\
I(X_\mathcal{T};W_k) &\leq I(Z_{[n]};W_k),
\end{align}
from which Theorem \ref{theorem:firstgen} follows immediately. Theorem \ref{theorem:secondgen} can be obtained similarly.

When the process is not sub-Gaussian, more general forms of these bounds can also be found in terms of the cumulant generating function. This result is given in the supplementary material. 

\subsection{Relations to existing results}

\paragraph{Connection to VC theory:}  For binary classification problems, i.e., $|\Yc| = 2$ with zero-one loss $\ell(w, (x, y)) = \mathbb{I}(h_w(x) \not= y)$, the generalization error of any classifier $W$ is upper bounded as $\gen(\xi, P_{W|Z_{[n]}}) \leq O(\sqrt{\frac{d_{VC}(\Wc)}{n}})$, where $d_{VC}(\Wc)$ is the VC-dimension of the classification function class $\Hc_\Wc$ {(c.f.,  \cite{shalev2014understanding} Ch. 6)}. The generalization error bound in Theorem \ref{theorem:general}, or more precisely the proposed stochastic chaining approach, can naturally recover the VC-dimension based bound, and we establish this connection in the supplementary material.

\paragraph{Discussion on the chaining construction:} The conventional deterministic chaining places certain structural constraints on the hierarchical partitions. For example, consider a partition of a bounded 2-D space using congruent hexagon cells; the next partition at the higher level will be collections of such hexagons. This subsequently implies that hierarchy must follow a certain relation between consecutive levels, and the analysis of such hierarchical partitions can be complex. The stochastic chaining technique can remove the geometric constraints in the \textit{design of hierarchical partitions} as in Corollary \ref{corollary:partition} in many cases. In the example above, we can replace the partition using either an additive Gaussian noise or additive noise with a uniform distribution on hexagons (see the second example in the next section where a similar uniform additive noise is used). 

Since stochastic chains include conventional partition-based chaining as a special case, it is not more difficult to construct. In fact, the construction can be more straightforward due to its flexibility. For example, for bounded metric space, we can use the following generic construction: let $p(W_{k-1}|W_k)$ be uniformly distributed on a metric ball of radius $2^{-k}$ centering at $W_k$. If more information regarding the distribution of $W$ is known, we can further optimize the chain, e.g., by adjusting the radius such that they are dependent on the density value of $W_k$; more specifically, we can let the radius be larger for $W_k$ values of lower density, and vice versa. If the metric space is also a vector space, it can be convenient to let $p(W_{k-1}|W_k)$ be some vector Gaussian distribution with covariance scaling like $2^{-k}$. This allows more opportunity for optimization for stronger bounds in a parametric form. In contrast, it is impossible to design partitions (or deterministic mappings \cite{hafez2020conditioning}) to mimic such behaviors, let alone finding analytic bound. This issue in fact has a natural origin in source coding: deterministic quantization design vs. probabilistic forward test channel modeling. The latter is used in source coding for mathematically precise characterization, and for analytic optimization.

\paragraph{{Comparison to the chaining technique in \cite{hafez2020conditioning}:}} The alternative chaining method proposed by Hafez-Kolahi et al. (Theorem 6 in \cite{hafez2020conditioning}) used a different chaining construction, which does not require hierarchical partitions, and to some extent it helps resolve the difficulty in designing such hierarchical partitions. However, this simplification came with a heavy price: the learning algorithm must be \textit{deterministic}, the hypothesis space $\mathcal{W}$ still needs to be \textit{bounded} (since the core steps rely on \cite{asadi2018chaining}), and there is a factor of 2 loss in the bound. The restrictions make it inapplicable in the two examples we study in the next section. In contrast, the proposed method applies to unbounded metric space, and does not require the learning algorithm to be deterministic.

\section{Two examples: estimating the Gaussian mean and phase retrieval}
\label{sec:examples}

We analyze two simple settings, which demonstrate the effectiveness of the proposed stochastic chaining technique. The purpose for discussing the following two examples is by no means to literally characterize the generalization error, since the generalization error can be calculated directly due to the simplicity of the examples. We aim to show the effectiveness of the proposed stochastic chaining technique in these two  examples by comparing with the underlining generalization error and some previous generalization error bound.

\subsection{Estimating the Gaussian Mean}
Consider the case when the training samples $Z_{[n]}$ are drawn $i.i.d.$ following $N(\mu, \sigma^2)$ for some unknown $\mu$. Here $\mathcal{T}=\mathbb{R}$, and a natural choice of the metric on this space is the (scaled) Euclidean distance. The loss function is $\ell(w,Z)=(w-Z)^2$, and by defining $\bar{Z}_n:=\frac{1}{n}\sum_{i=1}^nZ_i$, the random process (indexed by $w$) of interest can be written as
\begin{align}
\gen_{Z_{[n]}}(\xi,w)=\sigma^2 + \mu^2 - \frac{\sum_{i=1}^n Z_i^2}{n} + 2w (\bar{Z}_n-\mu).
\end{align}
It follows that 
\begin{align}
\gen_{Z_{[n]}}(\xi,w)- \gen_{Z_{[n]}}(\xi,v)= 2(w - v) \left(\bar{Z}_n - \mu \right),
\end{align}
which is $d^2(w, v)$ sub-Gaussian with $d^2(w, v) =  \frac{4 \sigma^2 (w - v)^2}{n}$.
The learner deterministically estimates $\mu$ by averaging the training samples, i.e., $W =\bar{Z}_n$. We shall use Theorem \ref{theorem:firstgen} to bound the generalization error in this case.

To build a stochastic chain, select a sequence of mutually independent Gaussian noise $\left\{ N'_{i} \right\}_{i \in \Nb}$, which is independent of $W$, and $N'_i \sim \Nc(0, \sigma'^2_i)$, where $\sigma_{i}'^2 = \frac{\sigma^2}{2^i n}$. Define the cumulative noise 
\begin{align}
\vspace{-0.1cm}
N_{k} := \sum_{i = k + 1}^{\infty} N'_i \sim \Nc(0, \sigma^2_k),
\end{align}
where $ \sigma^2_k = \frac{\sigma^2}{2^k n}$.
The stochastic chain is designed as
\begin{align}
W_{k} - \mu & = \alpha_k(W - \mu + N_k),
\end{align}
where $\alpha_k =\frac{\sigma^2/n}{\sigma^2/n + \sigma_k^2} = \frac{1}{1 + 2^{-k}}$.
We then have
\begin{align}
W_{k-1} - \mu = \frac{\alpha_{k-1}}{\alpha_k} (W_k - \mu) + \alpha_{k-1}N'_{k}, \label{eqn:W_k-1_k}
\end{align}
where $W_k$ and $N'_{k}$ are independent. 
Under this stochastic chain, we can derive the expression for $\sqrt{\Eb[d(W_k, W_{k-1})^2]}$ 
and the mutual information term $I(Z_{[n]}; W_k)$. Specifically, $\Eb[d(W_k, W_{k-1})^2] \leq \frac{\sigma^4}{n^2} \frac{3}{2^{k-1} + 1}$, which relies on the relations between $W_k$ and $W_{k-1}$ in (\ref{eqn:W_k-1_k}) and the detailed calculation is given in the supplementary material. The mutual information can be upper bounded as
\begin{align}
I(Z_{[n]}; W_k) \leq I(W; W_{k}) = \frac{1}{2} \ln(1 + 2^k),
\end{align}
where the inequality is due to the data processing inequality over the Markov chain $Z_{[n]} \leftrightarrow W \leftrightarrow W_k$ and the equality is by the Gaussian channel nature of the stochastic chain design. The detailed proof steps
are given in the supplementary material. A bound of the following form can then be obtained 
\begin{align}
 & \Eb[X_W]  \leq \frac{\sigma^2}{n} \sum_{k = -\infty}^{\infty} \sqrt{\frac{3 \ln(1 + 2^k)}{2^{k-1} + 1}}. \label{eqn:sums}
\end{align}
Note that the series sum on the right hand side of (\ref{eqn:sums}) converges, and thus the bound is of order $O( \sigma^2/n)$. Bounding the series sum using numerical methods, we can then obtain $\Eb[X_W] \leq  \frac{13\sigma^2}{n}$. 

Due to the simplicity of the setting, the generalization error can in fact be calculated exactly to be $\frac{2 \sigma^2}{n}$. It can be seen that the generalization bound offered by Theorem \ref{theorem:firstgen} has the same $O(\sigma^2/n)$ order as the true generalization error. In contrast, the authors of \cite{bu2020tightening} derived a generalization error bound of the order $O(\sigma^2/\sqrt{n})$ using the individual sample mutual information approach. Thus the proposed approach results in an order-wise improvement in this example case. More importantly, it can be seen that the proposed chaining approach allows us to overcome the limitation of bounded metric space (i.e., the chaining mutual information approach \cite{asadi2018chaining} does not even apply in this setting), and also simplify the calculation due to the introduced dependence structure in the chain. In the supplementary material, we further derive an improved bound (with a slightly better constant factor) using Theorem \ref{theorem:secondgen}. 

\subsection{Phase retrieval}

\begin{table*}
  \caption{Comparison of $\Eb [X_W]$ bounds}
  \label{table:EX_W}
  \centering
  \setlength\tabcolsep{4pt}
  \begin{tabular}{cccccccc}
    \toprule
    $\epsilon$                                      & $1/20$    &  $1/30$  &  $1/40$   &   $1/50$   &   $1/100$ &   $1/200$   &   $1/400$\\
    \midrule    \midrule
    Chaining mutual information\cite{asadi2018chaining} &  1.1013     & 0.7507  & 0.5709 & 0.4612 & 0.2364 & 0.1204 & 0.0610 \\
     \midrule
    stochastic chaining ($\gamma=3.75$)                          &  0.4951     &0.3387  &  0.2581 & 0.2088 & 0.1074 & 0.0548 & 0.0278 \\
     \midrule
    $\Eb [X_W]$ true value                    & 0.0626      & 0.0417  & 0.0313 & 0.0250 & 0.0125& 0.0062  & 0.0031  \\
    \bottomrule
  \end{tabular}
\end{table*}

In the phase retrieval example given in \cite{asadi2018chaining}, the data $Z:=(G_1,G_2)\sim \mathcal{N}(0,I_2)$ is a standard normal vector in $\mathbb{R}^2$. The hypothesis class is $\mathcal{W}=[0,2\pi)$, and through the transformation $t=(\cos w,\sin w)$ for $w\in \mathcal{W}$, it is in fact the same as $\mathcal{T}=\{t\in R^2: ||t||_2=1\}$; we will use them interchangeably. Define the loss function $\ell(t,Z)=-\langle t, Z \rangle$, which implies that the learner wishes to estimate an angle for the underlying data, and the generalization error process is a Gaussian process $X_t:=\langle t, Z \rangle$. The metric $d$ is the Euclidean distance, and the process $X_\mathcal{T}$ is sub-Gaussian. Suppose the learned parameter is
\begin{align}
W:=\left(\argmax_{\phi\in [0,2\pi)} X_{\phi}\right)\oplus \zeta \, (\text{mod } 2\pi),
\end{align}
where $\zeta$ is independent of $X_\mathcal{T}$, and has an atom with a mass $\epsilon$ on 0, and $1-\epsilon$ that is uniformly distributed in $[0,2\pi)$. Note that $\argmax_{\phi\in [0,2\pi)} X_{\phi}$ is exactly the phase of $(G_1,G_2)$, which will be the hypothesis learned by an ERM learner, and $W$ being retrieved here is a noisy version of the phase. 

The stochastic chain can be given as 
\begin{align}
W_k=(W\oplus N_k)(\text{mod } 2\pi), 
\end{align}
where $N_k=\sum_{i=k+1}^\infty N'_i$, and $N'_k$ is uniformly distributed on $[-\gamma^{-k}\pi,\gamma^{-k}\pi)$ for some $\gamma>1$ to be specified later; $N'_k$'s are mutually independent and also independent of the hypothesis parameter $W$. 

Since $W\oplus N_{-1}$ is independent of $Z$ and uniformly distributed on $[0,2\pi)$, we have $\Eb[X_{W_{-1}}]=\Eb[\langle W+N_{-1}, Z \rangle]=0$. It is also clear that $W_k\rightarrow W$ when $k\rightarrow \infty$ a.s., and thus $\Eb [X_{W}]=\lim_{k\rightarrow \infty}\Eb [X_{W_k}]$ since the process is Gaussian. Since $W_{k-1}-W_{k}$ is exactly $N'_{k}$, the Euclidean distance between $W_k$ and $W_{k-1}$ (using their vector representations) is bounded by the length of the arc, i.e., $d(W_{k},W_{k-1})\leq \gamma^{-k}\pi$. 
We can now apply Theorem \ref{theorem:firstgen}, where 
\begin{align}
I(W_{k};X_\mathcal{T}) & =h(N_{k}\oplus W)-h(N_{k}\oplus \zeta) \notag \\
& = \log 2\pi-h(N_{k}\oplus \zeta).
\end{align}
The second term can be bounded as 
\begin{align}
h(N_{k}\oplus \zeta)& \geq h\left(N_{k}\oplus \zeta\bigg{|}\sum_{k+2}^{\infty}N'_j\right) =h(N'_{k+1}\oplus \zeta), \label{eqn:relaxeddiffent}
\end{align}
using the fact that more conditioning reduces the differential entropy.
Due to the structure of the distribution of $N'_{k+1}$ and $Z$, the density of $N'_{k+1}\oplus \zeta$ can be written down explicitly as
\begin{align}
f(N'_{k+1}+\zeta)=\left \{
\begin{array}{ll}
(2\pi)^{-1}(1-\epsilon) \\
\quad \quad \left[-\pi, -\gamma^{-k-1}\pi\right)\cup\left[ \gamma^{-k-1}\pi,\pi\right)\\
(2\pi)^{-1}(\gamma^{k+1}\epsilon+(1-\epsilon))\\
\quad \quad  \left[-\gamma^{-k-1}\pi,\gamma^{-k-1}\pi\right).
\end{array}
\right.
\end{align}
Thus we can bound $h(N_{k}\oplus \zeta)$ and subsequently $I(W_{k};X_T)$ using this density function, which eventually gives
\begin{align}
\Eb[X_W]&\leq \sqrt{2}\pi\sum_{k=0}^{\infty}\gamma^{-k}{\Big (}(1-\epsilon)(1-\frac{1}{\gamma^{k+1}})\log (1-\epsilon) \notag \\
& \quad+\left[\epsilon+\frac{1-\epsilon}{\gamma^{k+1}}\right]\log\left[\gamma^{k+1}\epsilon +1-\epsilon\right] {\Big )}^{1/2}. \label{eqn:phasefinal}
\end{align}

When choosing $\gamma=2$, this is almost identical to the result given in \cite{asadi2018chaining} using the partition based chaining, except the slightly better coefficient $\sqrt{2}\pi$ instead of $6\sqrt{2}$. This improved coefficient is mainly due to the more explicit bound on $d(W_{k},W_{k-1})$ inherent in the Euclidean space, instead of the same distance derived in a generic metric space. 

One advantage of the proposed approach is that we can further optimize $\gamma$ over $\mathbb{R}$. Observe that the series has a faster decaying tail if $\gamma$ is large, however, the first term, i.e., $k=0$, approaches $\infty$ when $\gamma\rightarrow \infty$. Thus there is an optimal $\gamma$ value in between for this bound. Numerical result suggests $\gamma^*\approx3.75$, which provides a slight improvement comparing to $\gamma=2$. As noted in \cite{asadi2018chaining}, in this toy setting, we can in fact calculate the exact true value $\Eb [X_W]=\epsilon\frac{\sqrt{\pi}}{2}$. A comparison of several bounds is given in Table. \ref{table:EX_W}. To obtain (\ref{eqn:phasefinal}), we have in fact relaxed this bound in (\ref{eqn:relaxeddiffent}) for convenience using a simple property of the entropy function, and therefore loosen the bound to some extent. Moreover, we have chosen to use the geometric sequence $\gamma^k$ to produce the stochastic chain, and it is possible other sequences can produce tighter bounds. 

The individual sample mutual information bound in \cite{bu2020tightening} requires multiple samples. In this phase retrieval example, however, there is only one sample $G^2$, and this bound degrades to the mutual information based bound in \cite{xu2017information}, which in this case is vacuous since $I(W; X_{\Tc})$ is in fact infinite.

\section{Conclusion}
\label{sec:conclusion}

We proposed a new chaining-based approach to bound the generalization error by replacing the hierarchical partitions with a stochastic chain.  The proposed approach can firstly remove naturally the restriction for the metric space to be bounded, and secondly, it helps to simplify the computation, and lastly, it can remove the geometric constraints in designing the hierarchical partitions in some cases. Two examples are used to illustrate that the proposed approach can overcome some difficulties in applying the chaining mutual information approach. The roles that chaining can play in bounding generalization error on conjunction with other information-theoretic approach, such as the conditional mutual information \cite{steinke2020reasoning}, information density \cite{hellstrom2020generalizationB}, and Wasserstein distance \cite{rodriguez2021tighter}, as well as the possible application in noisy and stochastic learning algorithms, call for further research. 

\bibliographystyle{IEEEtran}
% Generated by IEEEtran.bst, version: 1.14 (2015/08/26)

\onecolumn
\appendix
\renewcommand\thesubsection{\Alph{\thesection}.\roman{subsection}}

\section{Proof of Theorem 3.3}
\label{sec:outline}
To prove the theorem, we start by writing
\begin{align}
X_W=X_{W_{k_0}}+\sum_{k=k_0+1}^{k_1}(X_{W_{k}}-X_{W_{k-1}})+(X_W-X_{W_{k_1}}).
\end{align}
Because $\{W_k\}^{\infty}_{k=k_0}$ is a stochastic chain for $(\tilde{X}_\mathcal{T},W)$, we have $\Eb [X_{W_{k_0}}]=0$ and $\lim_{k_1 \rightarrow \infty} \Eb[X_{W_{k_1}}] = \Eb[X_{W}]$, and it follows that
\begin{align}
\Eb\left[X_W\right]
&=\sum_{k=k_0+1}^{\infty}\Eb\left[X_{W_{k}}-X_{W_{k-1}}\right]\notag\\
&=\sum_{k=k_0+1}^{\infty}\Eb\left[\Eb[X_{W_{k}}-X_{W_{k-1}}|W_k,W_{k-1}]\right]. \label{eqn:chain-decompose}
\end{align}

By the Donsker–Varadhan variational representation of the KL divergence, the expectation of a function $g(Y)$ with respect to the measure $P$ defined on $\mathcal{Y}$ can be bounded as
\begin{align}
\Eb_{P}[g(Y)]\leq \inf_{\lambda>0} \frac{1}{\lambda}\left(D(P||Q)+\log \Eb_{Q}[e^{\lambda g(Y)}]\right),
\end{align}
where $Q$ is another measure on $\mathcal{Y}$.

In our setting, let $Y=g(Y)=\Delta X_{w_k,w_{k-1}}$, $P=P_{\Delta X_{w_k,w_{k-1}}|w_k,w_{k-1}}$, and $Q=P_{\Delta X_{w_k,w_{k-1}}}$, then we have 
\begin{align}
&\Eb_{P_{\Delta X_{W_{k},W_{k-1}}|w_k,w_{k-1}}}[\Delta X_{w_k,w_{k-1}}]\notag\\
&\leq \inf_{\lambda>0}\frac{1}{\lambda}{\Big (}D(P_{\Delta X_{w_k,w_{k-1}}|w_k,w_{k-1}}||P_{\Delta X_{w_k,w_{k-1}}}) +\log \Eb_{P_{\Delta X_{w_k,w_{k-1}}}}\left[e^{\lambda (X_{w_k}-X_{w_{k-1}})}\right] {\Big )}\notag\\
&\leq \inf_{\lambda>0}\frac{1}{\lambda}{\Big (}D(P_{\Delta X_{w_k,w_{k-1}}|w_k,w_{k-1}}||P_{\Delta X_{w_k,w_{k-1}}})  +\frac{1}{2}d^2(w_k,w_{k-1})\lambda^2 {\Big )}\notag\\
&=d(w_k,w_{k-1})\sqrt{2D(P_{\Delta X_{w_k,w_{k-1}}|w_k,w_{k-1}}||P_{\Delta X_{w_k,w_{k-1}}})},
\end{align}
where the second inequality is because the process $X_{\mathcal{T}}$ is sub-Gaussian on $(\mathcal{T},d)$.

The fact that $\{W_k\}^{\infty}_{k=k_0}$ is a stochastic chain also implies that $\lim_{k\rightarrow \infty} \Eb[X_{W_{k}}]=\Eb[X_W]$, and thus 
\begin{align}
& \Eb\left[X_W\right] \leq \sum_{k=k_0+1}^\infty\Eb\left[d(W_k,W_{k-1})\sqrt{2D(P_{\Delta_k|W_k,W_{k-1}}||P_{\Delta_k})}\right].
\end{align}
Denote $\tilde{X}_{\Tc}$ as an independent copy of $X_{\Tc}$ such that $\tilde{X}_{\Tc}$ and $X_{\Tc}$ are independent and have the same distribution. $P_{\Delta_k}$ is the distribution of $\Delta \tilde{X}_{W_k, W_{k-1}}$ conditioned on $W_{k}, W_{k-1}$. 
By the data processing inequality for the KL divergence, we have
\begin{align}
D(P_{\Delta_k|W_k,W_{k-1}}||P_{\Delta_k})\leq D(P_{X_{\mathcal{T}}|W_k,W_{k-1}}||P_{X_{\mathcal{T}}}),
\end{align}
from which the second inequality follows. 

Let us now consider the mutual information based bound. It is seen that
\begin{align}
& \Eb\left[\Eb[X_{W_{k}}-X_{W_{k-1}}|W_k,W_{k-1}]\right] \notag \\
& = \int_{\Tc^2} \Eb_{P_{X_{\Tc}|w_k,w_{k-1}}}[\Delta X_{w_k,w_{k-1}}] dP_{W_k, W_{k-1}}(w_k, w_{k-1}) \notag \\
&\leq \int_{\Tc^2}  \inf_{\lambda>0}\frac{1}{\lambda}{\Big (}D(P_{X_{\Tc}|w_k,w_{k-1}}||P_{X_{\Tc}})  +\frac{1}{2}d^2(w_k,w_{k-1})\lambda^2 {\Big )} dP_{W_k, W_{k-1}}(w_k, w_{k-1}) \notag\\
& \leq \inf_{\lambda>0}\frac{1}{\lambda}{\Big (}\int_{\Tc^2} D(P_{X_{\Tc}|w_k,w_{k-1}}||P_{X_{\Tc}}) dP_{W_k, W_{k-1}}(w_k, w_{k-1})   + \int_{\Tc^2}  \frac{1}{2}d^2(w_k,w_{k-1})\lambda^2 dP_{W_k, W_{k-1}}(w_k, w_{k-1}){\Big )}  \notag\\
&= \inf_{\lambda > 0} \frac{I(X_{\Tc} ; W_k, W_{k-1})}{\lambda} + \frac{\lambda}{2} \Eb[d^2(W_k, W_{k-1})] \notag \\
& = \sqrt{\Eb[d^2(W_k, W_{k-1})]} \sqrt{2 I(X_{\mathcal{T}} ; W_k)}.
\end{align}
Combing with (\ref{eqn:chain-decompose}) we arrive at
\begin{align}
 \Eb\left[X_W\right] & \leq \sum_{k=k_0+1}^\infty\sqrt{\Eb[d^2(W_k, W_{k-1})]} \sqrt{2 I(X_{\Tc} ; W_k)} ,
\end{align}
which concludes the proof.
\qed

\section{A chaining bound in a more general form}

In this section we provide a more general bound without the assumption on sub-Gaussianity, and replace it with a more general form on the measure concentration.

\begin{definition}
Let $X$ be a real-valued random variable. The cumulant generating function of $X$ is $\Lambda_X(\lambda):=\log \Eb[e^{\lambda X)}]$ for $\lambda\in \mathbb{R}$. 
\end{definition}
If $\Lambda_X(\lambda)$ exists, then $\Lambda_X(0)=0$ and $\Lambda_X'(0)=\Eb X$, and it is convex.  
\begin{definition}

For a convex function $\psi$ defined on the interval $[0,b)$, where $0<b\leq \infty$, its Legendre dual $\psi^*$ is defined as
\begin{align}
\psi^*(x):=\sup_{\lambda \in [0,b)} (\lambda x-\psi(\lambda)).
\end{align}

\begin{lemma}
Assume that $\psi(0)=\psi'(0)=0$, then $\psi^*(x)$ is a non-negative convex and non-decreasing function on $[0,\infty)$ with $\psi^*(0)=0$. Moreover, its inverse function $\psi^{*-1}(y)=\inf\{x\geq 0: \psi^*(x)\geq y\}$ is concave, and can be written as 
\begin{align}
\psi^{*-1}(y)=\inf_{\lambda \in (0,b)}\left(\frac{y+\psi(\lambda)}{\lambda}\right).
\end{align}
\end{lemma}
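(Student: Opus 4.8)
The plan is to treat the two halves of the lemma separately: first the elementary properties of $\psi^*$, and then the variational formula for its generalized inverse, from which concavity will drop out for free. Throughout I would exploit that $\psi^*$ is a pointwise supremum of the affine maps $x \mapsto \lambda x - \psi(\lambda)$ indexed by $\lambda \in [0,b)$. Convexity of $\psi^*$ is then automatic, since a supremum of affine (hence convex) functions is convex. For non-negativity and $\psi^*(0)=0$, I would use the hypotheses $\psi(0)=0$ and $\psi'(0)=0$: convexity of $\psi$ with a vanishing derivative at the origin forces $\psi(\lambda)\geq 0$ on $[0,b)$, so evaluating the defining supremum at $\lambda=0$ gives $\psi^*(x)\geq 0$, with equality when $x=0$. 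Monotonicity on $[0,\infty)$ is inherited because for each $\lambda\geq 0$ the map $x\mapsto \lambda x-\psi(\lambda)$ is non-decreasing, and suprema preserve monotonicity.

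For the second part, write $F(y):=\inf_{\lambda\in(0,b)}\frac{y+\psi(\lambda)}{\lambda}$; I would prove $\psi^{*-1}(y)=F(y)$ by a two-sided inequality. The direction $\psi^{*-1}(y)\leq F(y)$ is direct: for each fixed $\lambda>0$ the point $x_\lambda=\frac{y+\psi(\lambda)}{\lambda}$ satisfies $\lambda x_\lambda-\psi(\lambda)=y$, hence $\psi^*(x_\lambda)\geq y$, so $x_\lambda$ lies in the sublevel set defining $\psi^{*-1}(y)$, giving $\psi^{*-1}(y)\leq x_\lambda$; taking the infimum over $\lambda$ yields the claim. The reverse inequality requires more care: given $x$ with $\psi^*(x)\geq y$ for $y>0$, the defining supremum cannot be attained at $\lambda=0$ (where the value is $0<y$), so for every $\varepsilon>0$ there is some $\lambda>0$ with $\lambda x-\psi(\lambda)>y-\varepsilon$, i.e. $x>\frac{(y-\varepsilon)+\psi(\lambda)}{\lambda}\geq F(y-\varepsilon)$. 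Letting $\varepsilon\downarrow 0$ and invoking continuity of $F$ then gives $x\geq F(y)$, whence $\psi^{*-1}(y)\geq F(y)$.

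Concavity of $\psi^{*-1}$ is finally read off from the established formula: each map $y\mapsto\frac{y+\psi(\lambda)}{\lambda}$ is affine in $y$, and an infimum of affine functions is concave, so $F=\psi^{*-1}$ is concave.

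The step I expect to be the main obstacle is the reverse inequality in the variational identity, specifically the limiting argument as $\varepsilon\downarrow 0$. It relies on left-continuity of $F$ at $y$, which I would justify from its concavity (continuity on the interior of the domain), and on ruling out the degenerate maximizer $\lambda=0$, which uses $y>0$ together with $\psi^*(0)=0$. The boundary case $y=0$ should be handled separately, where one checks directly that both $\psi^{*-1}(0)$ and $F(0)$ equal $0$ (the latter because $\frac{\psi(\lambda)}{\lambda}\to\psi'(0)=0$ as $\lambda\downarrow 0$).
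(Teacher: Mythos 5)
Your proof is correct. There is in fact nothing in the paper to compare it against: the paper states this lemma without proof, importing it as a standard fact from the concentration-of-measure literature (it is essentially Lemma 2.4 in Boucheron--Lugosi--Massart, \emph{Concentration Inequalities}), and your argument is the canonical one for that result. All the genuinely delicate points are handled: convexity and monotonicity of $\psi^*$ as a supremum of non-decreasing affine maps; non-negativity of $\psi$ itself (convexity plus $\psi(0)=\psi'(0)=0$ puts $\psi$ above its zero tangent), which gives both $\psi^*\geq 0$ and $\psi^*(0)=0$ as well as $x_\lambda=\frac{y+\psi(\lambda)}{\lambda}\geq 0$ in the easy direction; and, in the reverse direction, ruling out the degenerate witness $\lambda=0$ (valid once $\varepsilon<y$, which your argument implicitly requires) and passing $\varepsilon\downarrow 0$ via left-continuity of $F$, justified by finiteness and concavity of $F$ on $[0,\infty)$ --- a finite concave function is continuous on the interior, or more directly $F(y-\varepsilon)\geq \frac{y-\varepsilon}{y}F(y)$ since $F(0)\geq 0$. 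Your separate treatment of $y=0$, using $\frac{\psi(\lambda)}{\lambda}\to\psi'(0)=0$ as $\lambda\downarrow 0$, is exactly the step that uses the hypothesis $\psi'(0)=0$ and is correct. The only cosmetic remark: the two-sided argument also silently confirms that the sublevel set $\{x\geq 0:\psi^*(x)\geq y\}$ is nonempty (the points $x_\lambda$ belong to it), so $\psi^{*-1}(y)$ is finite; it would be worth one sentence to say so explicitly.
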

\end{definition}

\begin{theorem}
Assume $X_\mathcal{T}$ is a random process defined on the metric space $(\mathcal{T},d)$, with $\Eb [X_t]=0$ for all $t\in\mathcal{T}$ and
\begin{align}
\log \Eb\left[ e^{\frac{\lambda (X_t-X_s)}{d(t,s)}}\right]\leq \psi(\lambda), \,\, \text{for all}\,\, t,s\in \mathcal{T},\,\lambda\geq 0, \label{eqn:tail}
\end{align}
where $\psi$ is convex and $\psi(0)=\psi'(0)=0$. Let $\{W_k\}^{\infty}_{k=k_0}$ be a stochastic chain for the random process and random variable pair $(X_\mathcal{T},W)$, then
\begin{align}
\Eb\left[X_W\right] 
&\leq \sum_{k=k_0+1}^\infty\Eb\left[d(W_k,W_{k-1})\psi^{*-1}\left(D(P_{X_{\mathcal{T}}|w_k}||P_{X_{\mathcal{T}}})\right)\right]\label{eqn:generalbound}
\end{align}
Particularly, if $d(W_{k},W_{k-1})\leq \sigma_k$, then 
\begin{align}
\Eb\left[X_W\right]&\leq \sum_{k=k_0+1}^\infty\sigma_k\Eb\left[\psi^{*-1}\left(D(P_{X_{\mathcal{T}}|w_k}||P_{X_{\mathcal{T}}})\right)\right]\leq \sum_{k=k_0+1}^\infty\sigma_k\psi^{*-1}\left(I(X_{\mathcal{T}};W_k)\right).
\end{align}
\end{theorem}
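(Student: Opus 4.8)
The plan is to mirror the proof of Theorem \ref{theorem:general}, replacing the sub-Gaussian estimate by the general cumulant bound \eqref{eqn:tail} and tracking precisely where the square-root optimization gets replaced by the inverse Legendre dual. First I would write the telescoping decomposition $X_W = X_{W_{k_0}} + \sum_{k=k_0+1}^{k_1}(X_{W_k}-X_{W_{k-1}}) + (X_W - X_{W_{k_1}})$ and invoke the stochastic-chain properties $\Eb[X_{W_{k_0}}]=0$ and $\lim_{k_1\to\infty}\Eb[X_{W_{k_1}}]=\Eb[X_W]$, reducing the claim to bounding, for each $k$, the single-link quantity $\Eb\big[\Eb[X_{W_k}-X_{W_{k-1}}\mid W_k,W_{k-1}]\big]$.

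The heart of the argument is the per-link estimate. Fixing $W_k=w_k,\,W_{k-1}=w_{k-1}$ and writing $\Delta X_{w_k,w_{k-1}} := X_{w_k}-X_{w_{k-1}}$, I would apply the Donsker--Varadhan bound with $P = P_{\Delta X \mid w_k,w_{k-1}}$ and reference measure $Q$ equal to the law of $\Delta\tilde X_{w_k,w_{k-1}}$ for an independent copy $\tilde X_\Tc$ of the process, so that $Q$ does not carry the conditioning on the $W$'s. Under $Q$, the tail assumption \eqref{eqn:tail} rewritten as $\log\Eb_Q[e^{\lambda \Delta X_{w_k,w_{k-1}}}] \le \psi(\lambda\, d(w_k,w_{k-1}))$ yields
\[
\Eb_P[\Delta X_{w_k,w_{k-1}}] \le \inf_{\lambda>0}\frac1\lambda\Big(D(P\|Q) + \psi\big(\lambda\, d(w_k,w_{k-1})\big)\Big).
\]
Substituting $\nu = \lambda\, d(w_k,w_{k-1})$ factors out $d(w_k,w_{k-1})$ and turns the infimum into $\inf_{\nu>0}\frac{D(P\|Q)+\psi(\nu)}{\nu}$, which by the Lemma equals exactly $\psi^{*-1}(D(P\|Q))$. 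Thus each link is bounded by $d(w_k,w_{k-1})\,\psi^{*-1}\big(D(P_{\Delta X\mid w_k,w_{k-1}}\|P_{\Delta X})\big)$.

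Next I would collapse the divergence to the stated one. Since $\Delta X_{w_k,w_{k-1}}$ is a deterministic function of $X_\Tc$, the data-processing inequality for KL divergence gives $D(P_{\Delta X \mid w_k,w_{k-1}}\|P_{\Delta X}) \le D(P_{X_\Tc\mid w_k,w_{k-1}}\|P_{X_\Tc})$, and the Markov chain $X_\Tc \leftrightarrow W \leftrightarrow W_k \leftrightarrow W_{k-1}$ forces $P_{X_\Tc \mid w_k,w_{k-1}} = P_{X_\Tc\mid w_k}$, so this equals $D(P_{X_\Tc\mid w_k}\|P_{X_\Tc})$. Because $\psi^{*-1}$ is non-decreasing (immediate from its variational formula in the Lemma), taking expectations and summing over $k$ yields the first displayed bound. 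For the refinement, the uniform bound $d(W_k,W_{k-1})\le \sigma_k$ with $\psi^{*-1}\ge 0$ lets me pull $\sigma_k$ out of the expectation, and Jensen's inequality applied to the concave $\psi^{*-1}$ (again from the Lemma) gives $\Eb[\psi^{*-1}(D(P_{X_\Tc\mid W_k}\|P_{X_\Tc}))] \le \psi^{*-1}(\Eb[D(\cdot)]) = \psi^{*-1}(I(X_\Tc;W_k))$, completing the proof.

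The step I expect to demand the most care is the reference-measure bookkeeping in the Donsker--Varadhan application: one must apply \eqref{eqn:tail} to the \emph{unconditioned} difference $\Delta\tilde X_{w_k,w_{k-1}}$ evaluated at the fixed points $w_k,w_{k-1}$ rather than to the law of the difference after the chain has been conditioned, and then invoke both data processing and the Markov structure in the correct order to reduce the two-point conditioning to conditioning on $W_k$ alone. By contrast, the replacement of the square root by $\psi^{*-1}$ is a routine consequence of the rescaling $\nu=\lambda\,d(w_k,w_{k-1})$ together with the variational characterization of $\psi^{*-1}$ in the Lemma.
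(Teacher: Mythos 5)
Your proposal is correct and follows essentially the same route as the paper's proof: the same telescoping decomposition, the Donsker--Varadhan inequality applied per link with the independent-copy reference measure, the rescaling that turns $\inf_{\lambda>0}\frac{1}{\lambda}\bigl(D+\psi(\lambda\, d)\bigr)$ into $d\cdot\psi^{*-1}(D)$ via the lemma's variational formula, KL data processing plus the Markov property $P_{X_\Tc\mid w_k,w_{k-1}}=P_{X_\Tc\mid w_k}$, and finally Jensen on the concave $\psi^{*-1}$ to pass from the expected divergence to $I(X_\Tc;W_k)$. Your explicit bookkeeping of the unconditioned reference law and the substitution $\nu=\lambda\, d(w_k,w_{k-1})$ matches what the paper does (it states the tail condition with the increment already normalized by $d(t,s)$, which is the same computation), so there is nothing to correct.
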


\begin{proof}
To prove the theorem, we start by writing
\begin{align}
X_W=X_{W_{k_0}}+\sum_{k=k_0+1}^{k_1}(X_{W_{k}}-X_{W_{k-1}})+(X_W-X_{W_{k_1}}).
\end{align}
Because $\{W_k\}^{\infty}_{k=k_0}$ is a stochastic chain for $(\tilde{X}_\mathcal{T},W)$, we have $\Eb [X_{W_{k_0}}]=0$, and it follows that
\begin{align}
\Eb\left[X_W\right]-\Eb\left[X_W-X_{W_{k_1}}\right]
&=\sum_{k=k_0+1}^{k_1}\Eb\left[X_{W_{k}}-X_{W_{k-1}}\right]\notag\\
&=\sum_{k=k_0+1}^{k_1}\Eb\left[\Eb[X_{W_{k}}-X_{W_{k-1}}|W_k,W_{k-1}]\right],
\end{align}
By the Donsker–Varadhan variational representation of the KL divergence, the expectation of a function $g(X)$ with respect to the measure $P$ defined on $\mathcal{X}$ can be bounded as
\begin{align}
\Eb_{P}[g(X)]\leq \inf_{\lambda>0}\frac{D(P||Q)+\log \Eb_{Q}[e^{\lambda g(x)}]}{\lambda},
\end{align}
where $Q$ is another measure on $\mathcal{X}$.
In our setting, let $P=P_{\Delta_k|w_k,w_{k-1}}$, $Q=P_{\Delta_k}$, and $g(X)=X_{w_{k}}-X_{w_{k-1}}$, then we have 
\begin{align}
\frac{\Eb_{P_{X_{\Tc}|w_k,w_{k-1}}}[X_{w_{k}}-X_{w_{k-1}}]}{d(w_k,w_{k-1})}
&\leq \inf_{\lambda>0}\frac{D(P_{X_{\Tc}|w_k,w_{k-1}}||P_{X_{\Tc}})+\log \Eb_{P_{X_{\Tc}}}\left[e^{\frac{\lambda(X_{w_k}-X_{w_{k-1}})}{d(w_k,d_{k-1})}}\right]}{\lambda}\notag\\
&\leq \inf_{\lambda>0}\frac{D(P_{X_{\Tc}|w_k,w_{k-1}}||P_{X_{\Tc}})+\psi(\lambda)}{\lambda}\notag\\
&= \psi^{*-1}\left(D(P_{X_{\Tc}|w_k,w_{k-1}}||P_{X_{\Tc}})\right),
\end{align}
where the second inequality is by assumption (\ref{eqn:tail}). 
The fact that $\{W_k\}^{\infty}_{k=k_0}$ is a stochastic chain also implies that $\lim_{k\rightarrow \infty} \Eb[X_{W_{k}}]=\Eb[X_W]$, and thus we arrive at 
\begin{align}
\Eb\left[X_W\right]\leq \sum_{k=k_0+1}^\infty\Eb\left[d(W_k,W_{k-1})\psi^{*-1}\left(D(P_{X_{\Tc}|w_k,w_{k-1}}||P_{X_{\Tc}})\right)\right].
\end{align}
Moreover, since $\{W_k\}^{\infty}_{k=k_0}$ is a stochastic chain, the Markov condition implies that
\begin{align}
P_{X_{\mathcal{T}}|W_k,W_{k-1}}=P_{X_{\mathcal{T}}|W_k},
\end{align}
and thus
\begin{align}
D(P_{X_{\Tc}|W_k,W_{k-1}}||P_{X_{\mathcal{T}}})=D(P_{X_{\mathcal{T}}|W_k}||P_{X_{\mathcal{T}}}),
\end{align}
from which, the first inequality in (\ref{eqn:generalbound}) follows.

If $d(W_k,W_{k-1})\leq \sigma_k$, then we have 
\begin{align}
\Eb\left[X_W\right]&\leq \sum_{k=k_0+1}^\infty\Eb\left[d(W_k,W_{k-1})\psi^{*-1}\left(D(P_{X_{\mathcal{T}}|w_k}||P_{X_{\mathcal{T}}})\right)\right]\notag\\
&\leq \sum_{k=k_0+1}^\infty\Eb\left[\sigma_k\psi^{*-1}\left(D(P_{X_{\mathcal{T}}|w_k}||P_{X_{\mathcal{T}}})\right))\right]\notag\\
&\leq \sum_{k=k_0+1}^\infty \sigma_k\psi^{*-1}(I(X_{\mathcal{T}};W_k)),
\end{align}
where the last inequality is due to Jensen's inequality and $\Eb \left[D(P_{X_{\mathcal{T}}|W_k}||P_{X_{\mathcal{T}}})\right]=I(X_{\mathcal{T}};W_k)$ by definition.
\end{proof}

\section{Connection to VC dimension} 

Consider the binary classification problem with zero-one loss with hypothesis class parameterized by $\Wc$ and data $Z_{[n]} \sim \xi^n$. We follow the similar approach as in \cite{hafez2020conditioning}. 
Independent random variables $Z^{+}_{[n]} \sim \xi^n$, $Z^{-}_{[n]} \sim \xi^n$ and $R_{[n]} \sim \text{Rademacher}^n$ determines the random process $\Xc_{\Wc}$ with
\begin{align}
X_{w} = \frac{1}{\sqrt{n}} \sum_{i = 1}^{n} R_i (\ell(w, Z_i^+) - \ell(w, Z_i^{-})), \quad \forall w \in \Wc.
\end{align}
The data set is $Z_{[n]} = \{ Z_{i}^{R_i}\}_{i \in [n]}$, and algorithm $W | Z_{[n]} \sim P_{W | Z_{[n]}}$. It then follows that $\sqrt{n}\cdot\gen(\xi,P_{W|Z_{[n]}}) = \Eb[X_{W}]$. The difference $X_{w} - X_{u}$ is
\begin{align}
X_{w} - X_{u} = \frac{1}{\sqrt{n}}\sum_{i = 1}^{n} R_i(\ell(w, Z_i^+) - \ell(u, Z_i^{+}) + \ell(u, Z_i^+) - \ell(w, Z_i^-) ).
\end{align}
By Hoeffding's lemma, $X_{w} - X_{u}$ given $Z^{\pm}_{[n]}$ is sub-Gaussian with  parameter 
\begin{align}
& \frac{1}{n}\sum_{i=1}^n(\ell(w, Z_i^+) - \ell(u, Z_i^{+}) + \ell(u, Z_i^+) - \ell(w, Z_i^-) )^2 \notag \\
& \leq \frac{2}{n}  \sum_{i=1}^n(\ell(w, Z_i^+) - \ell(u, Z_i^{+}))^2 + \frac{2}{n} \sum_{i=1}^n(\ell(w, Z_i^-) - \ell(u, Z_i^{-}))^2. \label{eqn:sqr-norm}
\end{align}
Define the RHS of the inequality above as $d^{2}_{Z_{[n]}^{\pm}}(w, u)$. % and define $d_k^2(u, v) := \Eb[d^{2}_{Z_{[n]}^{\pm}}(u, v) | W_{k}=u, W_{k-1}=v]$. 
Given $Z_{[n]}^{\pm}$, if a stochastic chain $\{W_{k}\}_{k \geq k_0}$ exists for the conditional process $X_{\Tc} | Z_{[n]}^{\pm}$, we know
\begin{align}
	\Eb[X_{W} | Z_{[n]}^{\pm}] &= \sum_{k = k_0+1}^\infty \Eb[X_{W_{k}} - X_{W_{k-1}} | Z_{[n]}^{\pm}] \\
	&\leq \sum_{k = k_0+1}^{\infty}  \inf_{\lambda > 0} \left\{ \frac{I_{Z_{[n]}^\pm}(X_{\Wc}; W_{k}, W_{k-1})}{\lambda} + \frac{\lambda}{2} \Eb[d_{Z^{\pm}_{[n]}}^2(W_{k}, W_{k-1}) | Z_{[n]}^{\pm}] \right\} \\
	& \leq \sum_{k = k_0+1}^{\infty}  \sqrt{2 I_{Z_{[n]}^{\pm}}(X_\Wc; W_k, W_{k-1}) } \sqrt{\Eb[d^2_{Z_{[n]}^\pm}(W_{k}, W_{k-1}) | Z_{[n]}^\pm]} \\
	& = \sum_{k = k_0+1}^{\infty} \sqrt{\Eb[d^2_{Z_{[n]}^\pm}(W_{k}, W_{k-1}) | Z_{[n]}^\pm]} \sqrt{2 I_{Z_{[n]}^{\pm}}(X_\Wc; W_k) },
\end{align}
where $I_{Z_{[n]}^\pm}(X_{\Wc}; W_{k}):=D(P_{X_{\Wc}, W_{k} | Z_{[n]}^{\pm}} || P_{X_{\Wc} | Z_{[n]}^{\pm}} \otimes P_{W_{k} | Z_{[n]}^{\pm}} )$, and the last equality is due to Markov chain $X_{\Wc} \leftrightarrow W_{k} \leftrightarrow W_{k-1}$.

To prove Theorem \ref{theorem:vc}, it suffices to construct a stochastic chain $\{W_k\}_{k=k_0}^{\infty}$ on the metric space $(\Wc, d_{Z_{[n]}^{\pm}}(\cdot, \cdot))$ for each given $Z_{[n]}^{\pm} = z_{[n]}^{\pm} \in \Zc^{2n}$, and show $\Eb[X_{W} | Z_{[n]}^{\pm}] = O(\sqrt{d_{VC}(\Wc)/n})$. The following analysis is performed given $Z_{[n]}^{\pm}$ and we simply write $d(\cdot, \cdot) = d_{Z_{[n]}^{\pm}}(\cdot, \cdot)$. For any metric space $(\Wc, \tau)$, denote $\Cc(\Wc, \tau, \epsilon)$ as a minimum set covering (minimum $\epsilon$-net) of the metric space $(\Wc, \tau)$ at scale $\epsilon > 0$, and define $N(\Wc, \tau, \epsilon) := |\Cc(\Wc, \tau, \epsilon)|$ as the corresponding covering number. 

According to the definition $d(\cdot, \cdot)$ in (\ref{eqn:sqr-norm}), we know that $\min\{ d(u, v) > 0: \forall u, v \in \Wc \} \geq \sqrt{\frac{2}{n}}$. Thus there exists $k_1 > 0$, such that $N(\Wc, d, 2^{-k_1}) = N(\Wc, d, \epsilon), \forall 0 < \epsilon < 2^{-k_1}$. 

We construct a stochastic chain as follows. Consider a sequence of countable sets  $\{\Pc_k\}_{k \geq k_0}^{k_1}$, where $\Pc_k = \Cc(\Wc, d, 2^{-k})$ and $N(\Wc, d, 2^{-k_0}) = 1$. Then we can construct a Markov process $W_{k_0} \leftrightarrow W_{k_0 + 1} \leftrightarrow \cdots \leftrightarrow W_{k_1}$, where $W_{k} = \argmin_{w \in \Pc_k} d(w, W_{k+1}) $ for each $k = k_0, \ldots, k_1 - 1$ and $W_{k_1} = \argmin_{w \in \Pc_{k_1}} d(w, W)$. We then know that $\Eb[X_{W} | Z_{[n]}^{\pm}] = \Eb[X_{W_{k_1}} | Z_{[n]}^{\pm}]$. It was known \cite[Theorem 8.3.18]{vershynin2018high} that there exists some universal constant $C > 0$ that $N(\Wc, d, \epsilon) \leq (C/\epsilon)^{C d_{VC}(\Wc)}$ for any $\epsilon$, which implies that $|\Pc_k| = N(\Wc, d, 2^{-k}) \leq 2^{C' k d_{VC}(\Wc)}$ for some universal constant $C'$. It follows that
\begin{align}
\Eb[X_{W} | Z_{[n]}^{\pm}] & \leq \sum_{k = k_0+1}^{k_1} \sqrt{\Eb[d^2_{Z_{[n]}^\pm}(W_{k}, W_{k-1}) | Z_{[n]}^\pm]} \sqrt{2 I_{Z_{[n]}^{\pm}}(X_\Wc; W_k) } \\
& \leq \sum_{k = k_0+1}^{k_1} 2^{-k+1} \sqrt{2 \ln(|\Pc_k|) } \\
& \leq \sum_{k = k_0+1}^{k_1} 2^{-k+1} \sqrt{2 C' k d_{VC}(\Wc)\ln(2) }  = O\left( \sqrt{d_{VC}(\Wc)} \right).
\end{align}
We thus have
\begin{align}
    \gen(\xi,P_{W|Z_{[n]}}) &=  \frac{1}{\sqrt{n}} \Eb[X_{W}] =  \frac{1}{\sqrt{n}}\Eb[ \Eb[X_{W} | Z_{[n]}^{\pm}]] = O\left(\sqrt{\frac{d_{VC}(\Wc)}{n}}\right),
\end{align}
which is the desired result.

\section{Details for the Gaussian setting via Theorem \ref{theorem:firstgen}} \label{subsection:firstgaussian}

Recall that 
\begin{align}
\Delta_k=\gen_{Z_{[n]}}(\xi,W_k)- \gen_{Z_{[n]}}(\xi,W_{k-1})= 2(W_{k-1} -W_k) \left(\bar{Z}_n - \mu \right).
\end{align}
Consequently we define the metric as 
\begin{align}
d^2(w, v) = \frac{4\sigma^2 (w-v)^2}{n},
\end{align}
and it can be verified that the process $\gen_{Z_{[n]}}(\xi,w)_{w\in\mathbb{R}}$ is indeed sub-Gaussian under this metric. Recall $W_{k} - \mu = \alpha_k(W - \mu + N_k)$, $W$ and $N_k$ are independent, and $\alpha_k =\frac{\sigma^2/n}{\sigma^2/n + \sigma_k^2} = \frac{1}{1 + 2^{-k}}$. It follows that $W_{k}$ has mean $\mu$ and variance $ \alpha_k\frac{\sigma^2}{n}$. Since $\frac{\alpha_{k-1}}{\alpha_{k}} -1 = \frac{-1}{2^k + 2}$, by the relations between $W_k$ and $W_{k-1}$ in (\ref{eqn:W_k-1_k}) we can calculate
\begin{align}
\Eb[d(W_k, W_{k-1})^2] & = \frac{4 \sigma^2}{n} \Eb\left[ (W_k - W_{k-1})^2 \right] \\
&= \frac{4 \sigma^2}{n} \Eb\left[\left( \left(\frac{\alpha_{k-1}}{\alpha_{k}} - 1 \right) (W_k - \mu) + \alpha_{k-1} N'_{k} \right)^2 \right]\\
&= \frac{4 \sigma^2}{n} \Eb\left[\left( \frac{-1}{2^k + 2} (W_k - \mu) + \frac{1}{1 + 2^{-k+1}} N'_{k} \right)^2 \right]\\
&= \frac{4 \sigma^2}{n} \left( \frac{1}{(2^k+2)^2} \Eb[(W_k - \mu)^2] +  \frac{1}{(1 + 2^{-k+1})^2} \Eb[(N'_k)^2] \right) \\
&= \frac{4 \sigma^2}{n} \left( \frac{\alpha_k}{(2^k+2)^2} +  \frac{1}{(2^k + 2)(1 + 2^{-k+1})} \right) \frac{\sigma^2}{n} \\
&< \frac{4 \sigma^2}{n} \left( \frac{1/2}{2^k+2} +  \frac{1}{2^k + 2} \right) \frac{\sigma^2}{n} \\
&= \frac{4 \sigma^4}{n^2} \frac{3/2}{2^k + 2} = \frac{\sigma^4}{n^2} \frac{3}{2^{k-1} + 1},
\end{align}
where the inequality is by $\frac{\alpha_k}{2^k + 2} = \frac{1}{2 + 2^k + 1 + 2^{-k+1}} < \frac{1}{2}$ and $\frac{1}{1+ 2^{-k+1}} < 1$. Furthermore, the mutual information can be upper bounded as
\begin{align}
I(Z_{[n]}; W_k) \leq I(W; W_{k}) = \frac{1}{2} \ln(1 + 2^k),
\end{align}
where the inequality is due to the data processing inequality over the Markov chain $Z_{[n]} \leftrightarrow W \leftrightarrow W_k$ and the equality is by the Gaussian channel nature of the stochastic chain design. By the mutual information based generalization error bound in Theorem \ref{theorem:firstgen}, we have
\begin{align}
 \Eb[X_W]  & \leq \sum_{k = -\infty}^{\infty} \sqrt{\Eb[d^2(W_k,W_{k-1})]}\sqrt{2I(Z_{[n]}; W_k)} \\
 & \leq \frac{\sigma^2}{n} \sum_{k = -\infty}^{\infty} \sqrt{\frac{3 \ln(1 + 2^k)}{2^{k-1} + 1}} < 13 \frac{\sigma^2}{n},
\end{align}
where $ \sum_{k = -\infty}^{\infty} \sqrt{\frac{3 \ln(1 + 2^k)}{2^{k-1} + 1}} < 13$ is calculated numerically. \qed

\section{Details of the Gaussian setting via Theorem \ref{theorem:secondgen}}

In this section, we derived a tightened generalization bound for the Gaussian setting using Theorem \ref{theorem:secondgen}. We fixed an $i \in [n]$ in the discussion below, since in this setting there is no material difference in the indices. 
\begin{align}
\Delta \gen^i_k=\gen^i_{Z_i}(\xi,W_{i,k})-\gen^i_{Z_i}(\xi,W_{i, k-1}) = 2(W_{i, k-1} - W_{i,k}) (Z_i - \mu).
\end{align}
Consequently we define the metric as 
\begin{align}
d^2_i(w, v) = 4\sigma^2 (w - v)^2.
\end{align}
and it can be verified that the process $\gen^i_{Z_i}(\xi, w)_{w \in \Rb}$ is indeed sub-Gaussian under this metric. 
Select a sequence of mutually independent Gaussian noise $\{N'_{k}\}_{i \in \Nb}$, which is independent of $W$ and $N'_k \sim \Nc(0, \sigma'^2_k)$, where $\sigma'^2_k = \frac{\sigma^2}{2^k n}$. Define the cumulative noise 
\begin{align}
N_k := \sum_{j = k+1}^\infty N'_j \sim \Nc(0, \sigma^2_k),
\end{align}
where $\sigma^2_k = \frac{\sigma^2}{2^k n}$. The stochastic chain is designed as 
\begin{align}
W_{i,k} - \mu = \alpha_k (W - \mu + N_k),
\end{align}
where $\alpha_k = \frac{\sigma^2}{\sigma^2 + n \sigma_k^2} = \frac{1}{1 + 2^{-k}}$. We then have
\begin{align}
W_{i, k-1} - \mu = \frac{\alpha_{k-1}}{\alpha_k}(W_{i, k} - \mu) + \alpha_{k-1} N_k'.
\end{align}
Under this stochastic chain, similar to the proof in the previous section, we can derive the expression for $\sqrt{\Eb[d(W_{i, k}, W_{i, k-1})^2]}$ as
\begin{align}
\Eb[d(W_{i, k}, W_{i, k-1})^2] & = 4 \sigma^2 \Eb\left[ (W_{i, k}, W_{i, k-1})^2 \right] 
< \frac{\sigma^4}{n} \frac{3}{2^{k-1} + 1}.
\end{align}
Since
\begin{align}
W_{i, k} - \mu = \alpha_k (W - \mu + N_k) = \alpha_k \left( \frac{1}{n}(Z_i - \mu) + \frac{1}{n}\sum_{j \not= i} (Z_j - \mu) + N_k \right),
\end{align}
the mutual information term $I(Z_{i}; W_{i,k})$ can be calculated as
\begin{align}
I(Z_{i}; W_{i,k}) & = \frac{1}{2}\ln\left( 1 + \frac{\sigma^2/n^2}{ (n-1)\sigma^2/n^2 + \sigma^2_k} \right) \\
& = \frac{1}{2}\ln\left( 1 + \frac{1}{n-1 + n 2^{-k}} \right) \\
&= - \frac{1}{2}\ln\left( 1 - \frac{1}{n (1 + 2^{-k})} \right).
\end{align}
By the convexity of function $-\ln(1 - x)$, we know for any $x \in (0, 1)$,
\begin{align}
- \ln\left( 1 - \frac{x}{n} \right) \leq - \frac{n-1}{n} \ln(1 - 0) - \frac{1}{n} \ln(1 - x) = -\frac{1}{n} \ln(1 - x).
\end{align}
Let $x = 1/(1 + 2^{-k})$, we have $I(Z_i; W_{i,k}) \leq \frac{1}{2n} \ln(1 + 2^{k})$. Since $n \geq 2$, we know for any $x \in (0, 1)$
\begin{align}
-\ln\left(1 - \frac{x}{n}\right) < -\ln\left(1 - \frac{1}{n}\right)  \leq - \frac{n-2}{n}\ln(1 - 0) - \frac{2}{n} \ln\left(1 - \frac{1}{2}\right) \leq \frac{2 \ln(2)}{n},
\end{align}
which implies $I(Z_i ; W_{k}) \leq \frac{\ln(2)}{n}$.

By the mutual information based generalization error bound in Theorem \ref{theorem:secondgen}, we have
 the mutual information based bound is
\begin{align}
    \gen(\xi,P_{W|Z_{[n]}}) & \leq  \frac{1}{n}\sum_{i=1}^n \sum_{k= -\infty}^\infty \sqrt{\Eb[d^2(W_{i,k},W_{i,k-1})]}\sqrt{2I(Z_{i}; W_{i,k})} \\
    & \leq \frac{\sigma^2}{n} \sum_{k = -\infty}^{\infty} \sqrt{\frac{3 \min(\ln(1 + 2^k), 2\ln(2))}{2^{k-1} + 1}} < 11\frac{\sigma^2}{n},
\end{align}
where $  \sum_{k = -\infty}^{\infty} \sqrt{\frac{3 \min(\ln(1 + 2^k), 2\ln(2))}{2^{k-1} + 1}} < 11$ is calculated numerically. \qed

\end{document}